
\pdfoutput=1

\documentclass[sigconf, nonacm]{acmart}
\usepackage{thmtools,thm-restate}
\usepackage{tikz}
\usepackage{pgfplots}
\usetikzlibrary{arrows,intersections}
\usepackage{subcaption}
\usepackage[ruled,lined,linesnumbered]{algorithm2e} 
\usepackage{algorithmic}
\usepackage{mathtools}
\usepackage{enumitem}

\newcommand{\tree}{\tau}

\newcommand{\mec}{\mathcal{M}}
\newcommand{\mecld}{M_{LD}}

\newcommand{\mecexp}{M_{EM}}
\newcommand{\mecpf}{M_{PF}}
\newcommand{\mecrnm}{M_{RNM}}
\newcommand{\domain}{\mathcal{X}}

\newcommand{\range}{\mathcal{R}}
\newcommand{\real}{\mathbb{R}}

\newcommand{\TT}{\mathcal{T}}

\makeatletter
\newcommand{\vast}{\bBigg@{3}}
\newcommand{\vvast}{\bBigg@{4}}
\newcommand{\Vast}{\bBigg@{5}}





\makeatother

\newenvironment{customlegend}[1][]{
\begingroup
\csname pgfplots@init@cleared@structures\endcsname
\pgfplotsset{#1}
}{%
\csname pgfplots@createlegend\endcsname
\endgroup
}%
\def\addlegendimage{\csname pgfplots@addlegendimage\endcsname}

    


\newcommand\vldbdoi{XX.XX/XXX.XX}
\newcommand\vldbpages{XXX-XXX}
\newcommand\vldbvolume{14}
\newcommand\vldbissue{1}
\newcommand\vldbyear{2020}
\newcommand\vldbauthors{\authors}
\newcommand\vldbtitle{\shorttitle} 
\newcommand\vldbavailabilityurl{https://github.com/victorfarias/MultiObjectiveDifferentiallyPrivateSelection}
\newcommand\vldbpagestyle{plain} 

\begin{document}
\title{Differentially Private Multi-Objective Selection: Pareto and Aggregation Approaches}

\author{Victor A. E. Farias}
\affiliation{
  \institution{Universidade Federal do Ceará}
  \city{Fortaleza}
  \state{Ceará}
  \country{Brazil}
}
\email{victor.farias@lsbd.ufc.br}

\author{Felipe T. Brito}
\affiliation{
  \institution{Universidade Federal do Ceará}
  \city{Fortaleza}
  \state{Ceará}
  \country{Brazil}
} 
\email{felipe.timbo@lsbd.ufc.br}

\author{Cheryl Flynn}
\affiliation{%
  \institution{AT\&T Chief Data Office}
  \city{Bedminster}
  \state{NJ}
  \country{USA}
}
\email{cflynn@research.att.com}

\author{Javam C. Machado}
\affiliation{
  \institution{Universidade Federal do Ceará}
  \city{Fortaleza}
  \state{Ceará}
  \country{Brazil}
} 
\email{javam.machado@lsbd.ufc.br}

\author{Divesh Srivastava}
\affiliation{
  \institution{AT\&T Chief Data Office}
  \city{Bedminster}
  \state{NJ}
  \country{USA}
}
\email{divesh@research.att.com}

\begin{abstract}
  Differentially private selection mechanisms are fundamental building blocks for privacy-preserving data analysis. While numerous mechanisms exist for single-objective selection, many real-world applications require optimizing multiple competing objectives simultaneously. We present two novel mechanisms for differentially private multi-objective selection: PrivPareto and PrivAgg. PrivPareto uses a novel Pareto score to identify solutions near the Pareto frontier, while PrivAgg enables privacy-preserving weighted aggregation of multiple objectives. Both mechanisms support global and local sensitivity approaches, with comprehensive theoretical analysis showing how to compose sensitivities of multiple utility functions. We demonstrate their practical applicability through two real-world applications: cost-sensitive decision tree construction and multi-objective influential node selection in social networks. The experimental results showed that our local sensitivity-based approaches achieve significantly better utility compared to global sensitivity approaches across both applications and both Pareto and Aggregation approaches. Moreover, the local sensitivity-based approaches are able to perform well with typical privacy budget values $\epsilon \in [0.01, 1]$ in most experiments.

\end{abstract}

\maketitle

\pagestyle{\vldbpagestyle}
\begingroup\small\noindent\raggedright\textbf{PVLDB Reference Format:}\\
\vldbauthors. \vldbtitle. PVLDB, \vldbvolume(\vldbissue): \vldbpages, \vldbyear.\\
\href{https://doi.org/\vldbdoi}{doi:\vldbdoi}
\endgroup
\begingroup
\renewcommand\thefootnote{}\footnote{\noindent
This work is licensed under the Creative Commons BY-NC-ND 4.0 International License. Visit \url{https://creativecommons.org/licenses/by-nc-nd/4.0/} to view a copy of this license. For any use beyond those covered by this license, obtain permission by emailing \href{mailto:info@vldb.org}{info@vldb.org}. Copyright is held by the owner/author(s). Publication rights licensed to the VLDB Endowment. \\
\raggedright Proceedings of the VLDB Endowment, Vol. \vldbvolume, No. \vldbissue\ %
ISSN 2150-8097. \\
\href{https://doi.org/\vldbdoi}{doi:\vldbdoi} \\
}\addtocounter{footnote}{-1}\endgroup

\ifdefempty{\vldbavailabilityurl}{}{
\vspace{.3cm}
\begingroup\small\noindent\raggedright\textbf{PVLDB Artifact Availability:}\\
The source code, data, and/or other artifacts have been made available at \url{\vldbavailabilityurl}.
\endgroup
}

\vspace{-10pt}

\providetoggle{vldb}
\settoggle{vldb}{false}

\section{Introduction}
\label{sec:introduction}


Data analysis is a fundamental tool for decision making in many areas, such as healthcare, finance, and marketing. Data analysts can use data to extract valuable insights, make predictions, optimize decisions and publish their findings. On the other hand, a malicious adversary can use the published data to infer sensitive information about individuals \cite{henriksen2016re, hamza2013attacks, narayanan2016break}. 

To this end, Differential Privacy (DP) \cite{dwork2011differential, dwork2006calibrating} arises as a formal definition that provides strong privacy guarantees for data release. It supposes that a malicious adversary has some prior knowledge about $n-1$ individuals and wants to infer the information about the $n$-th individual. Differential privacy ensures that the adversary's belief about the n-th individual does not change significantly based on the presence or absence of the individual in the dataset. 

Algorithms can achieve differential privacy by perturbing the output, which releases an approximate answer to a non-private query over an input database with noise added. The magnitude of the noise should be large enough to cover the identity of the individuals in the input database.

Differential privacy has been widely applied to the problem of \textit{single-objective private selection}. In this problem, the goal is to select an element from a set of candidates that maximizes a single objective function. Formally, given a set of candidates $\range$ and a utility function $u: \domain \times \range \rightarrow \real$ that takes a candidate $r \in \range$ and a dataset $x \in \domain$ and outputs a score, the goal is to select an element $r \in \range$ that approximately maximizes $u(x, r)$ while satisfying differential privacy. The utility function $u$ is designed by the data analyst and measures how well a candidate $r$ satisfies the objective of a given task. For the rest of the paper, we will use utility function and objective function interchangeably.

Many algorithms have been proposed to solve this problem under differential privacy: the exponential mechanism \cite{mcsherry2007mechanism}, the report noisy max algorithm \cite{dwork2014algorithmic}, the permute-and-flip mechanism \cite{mckenna2020permute} and the local dampening mechanism \cite{DBLP:journals/pvldb/FariasBFMMS20, farias2023local}. These mechanisms use the notion of sensitivity in two flavors, \textit{global sensitivity} and \textit{local sensitivity}, where these notions measure how much the utility function changes when a single individual is added or removed from a dataset.

Such algorithms are very important in the field of differential privacy since they are fundamental building blocks and can be composed to create more complex algorithms. Their applications include private release of streams, histograms, databases and graphs \cite{zhu2017differentially}, data mining \cite{fletcher2019decision}, machine learning \cite{ji2014differential, liu2021machine} and social network analysis \cite{task2012guide, jiang2021applications}.

However, many real world problems require the optimization of multiple competing objectives. For instance, in healthcare, a medical diagnostic machine learning system for detecting life-threatening diseases must balance multiple competing objectives. A classifier must achieve high true positive rate (TPR) to correctly identify patients who have the disease, while also maintaining high true negative rate (TNR) to avoid unnecessary treatments and anxiety in healthy patients. A false negative could mean missing a critical diagnosis, while a false positive could lead to unnecessary invasive procedures. Moreover, when releasing such diagnostic models, protecting patient privacy through differential privacy is essential to prevent the leakage of sensitive health information. 

Moreover, releasing a machine learning model trained on patient data could inadvertently leak sensitive information about individuals in the training dataset. Recent attacks have shown that it's possible to extract training data from machine learning models through membership inference attacks \cite{shokri2017membership}. Therefore, differential privacy plays a crucial role in preventing such attacks by adding carefully calibrated noise during the model training and selection process, ensuring that the presence or absence of any individual patient's data cannot be reliably detected in the final model.



Thus, we define our problem statement of \textit{multi-objective private selection} as follows. Given a set of candidates $\range$ and a set of utility functions $u_1, u_2, \ldots, u_m: \domain \times \range \rightarrow \real$, the goal is to select an element $r \in \range$ that maximizes $\{u_1(x,r), u_2(x,r), \ldots, u_m(x,r)\}$ while satisfying differential privacy.

Note that the utility functions may be conflicting and there could be no candidate that maximizes all utility functions simultaneously. Instead, it is desired to find a set of candidates that are not dominated in all utility functions. These are said to be \textit{Pareto optimal} and the set of all Pareto optimal candidates is called the \textit{Pareto front}. In the example of decision trees, the Pareto front would contains trees such that no other tree is better in both true positive rate and true negative rate.

A second approach for tackling multi-objective problems is to aggregate the objective functions into a single scalar objective function using the weighted sum method. In this scenario, the data analyst assigns a weight to each objective function and the goal is to select a candidate that maximizes the weighted sum of the objective functions. This approach is useful in the case that the data analyst has prior knowledge about the importance of each objective function. Moreover, it can be seen as complementary to the Pareto approach where the data analyst can aggregate a subset of the objective functions that make sense to be aggregated, e.g., objective functions that are related to cost in the same currency, and use the Pareto approach for the aggregated objective function with the remaining objective functions.

In this paper, we propose two novel approaches to solve the problem of multi-objective private selection under differential privacy. We first propose the \textit{PrivPareto} mechanism, which selects candidates that are as close as possible to the Pareto front under differential privacy. This mechanism gives preference for the candidates that are the least dominated by other candidates, i.e., that there are the least number of candidates that are better in all objective functions. For that, we propose the \textit{Pareto score} that measures how many candidates are better than a given candidate in all objective functions. We then propose the \textit{PrivAgg} mechanism where it aims to approximately select the best candidate with the largest weighted sum of the utility functions under differential privacy. 

A single objective mechanism requires the data analyst to provide the utility function and the global or local sensitivity. Similarly, in our setting, the data analyst also provides the multiple utility functions and the global or local sensitivity for each utility function. Our mechanisms support both global and local sensitivity. Single objective mechanisms are part of the definition of our mechanisms for selecting the best candidate based on the Pareto score or on the weighted sum of the utility functions. For that, our mechanisms algorithmically compute the resulting sensitivity of the Pareto score and weighted sum which frees the data analyst from the burden of computing the sensitivity manually.

To show the effectiveness and generality of our mechanisms, we conduct an extensive experimental evaluation on two distinct problems: (i) multi-objective influential node selection in social networks where we select the most influencial nodes in a graph based on multiple centrality measures, and (ii) cost-sensitive decision tree construction where we build a set of approximate Pareto optimal decision trees maximizing true positive rate and true negative rate under differential privacy.


To the best of our knowledge, this is the first work to address the problem of multi-objective private selection using differential privacy. Our contributions are summarized as follows:

\begin{itemize}
    \item We propose PrivPareto, a novel multi-objective selection mechanism. We propose a Pareto based score that indicates how much a given candidate is dominated, i.e., how many candidates are better than the given candidate in all objective functions. This score represents how good is a candidate when compared to the the other candidates.
    \item We develop the algorithm to compute the sensitivity of the Pareto score based on the utility functions and the global or local sensitivity of the utility functions provided by the data analyst. 
    \item  We propose PrivAgg, a multi-objective selection mechanism based on weighted sum aggregation of the utility functions into a higher scalar utility function. 
    \item  We develop a theoretical background to show how to algorithmically compose the global and local sensitivity of the higher aggregated scalar function based on the single utility functions.    
    \item We address the problem of constructing cost sensitive decision trees with two conflicting utility functions: true positive rate and true negative rate. We tackle this problem using differentially private evolutionary family of algorithms called \textit{DP-MOET} with the PrivPareto and PrivAgg mechanisms. For that, we also show how to compute global and local sensitivity for true positive rate and true negative rate. 
    \item We create the family of algorithms called \textit{DP-MoTkIN} applying PrivPareto and PrivAgg mechanism to tackle  problem of influential analysis on graphs where the goal is to retrieve the top-k most influential nodes. As influence metric, we use two conflicting metrics: egocentric density and degree centrality. Also, we address the computation of the local and global sensitivity for egocentric density and degree.
\end{itemize}


The paper is organized as follows: Section 2 covers the background on differential privacy and sensitivity analysis. Section 3 reviews related work on differentially private selection mechanisms. Sections 4 and 5 introduce our main contributions, PrivPareto and PrivAgg, with theoretical analyses. Section 6 applies these mechanisms to cost-sensitive decision tree construction and Section 7 to multi-objective influential node selection in social networks. Section 8 concludes with a summary of contributions and results.

\section{Differential Privacy}
\label{sec:background}

In this section, we present the basic definitions of differential privacy, highlighting sensitivity metrics (global and local) and defining sensitivity functions that provide efficient upper bounds for computing local sensitivity.

\subsection{Basic Definitions}

Let $x$ represent a sensitive database and \( f \) a function (or query) to be executed on \( x \). The database is modeled as a vector \( x \in \mathcal{D}^n \), with each entry corresponding to an individual tuple. To release the output \( f(x) \) without compromising the privacy of individuals, it is necessary to design a randomized algorithm \( \mathcal{M}(x) \) that introduces noise to \( f(x) \) while adhering to the formal definition of differential privacy outlined below.

\begin{definition}
    ($\epsilon$-Differential Privacy \cite{dwork2006our, dwork2006calibrating}). A randomized algorithm $\mathcal{M}$ satisfies $\epsilon$-differential privacy if, for any two databases $x$ and $y$ such that $d(x, y) \leq 1$, and for any possible output $O$ of $\mathcal{M}$, the following inequality holds:
    \[
    Pr[\mathcal{M}(x) = O] \leq \exp(\epsilon) Pr[\mathcal{M}(y) = O],
    \]
    where $Pr[\cdot]$ denotes the probability of an event, and $d$ is the Hamming distance between the two databases, defined as $d(x, y) = |\{ i \mid x_i \neq y_i \}|$, i.e., the number of tuples that differ. We refer to $d$ as the distance between the two databases.
    \label{def:dp}
\end{definition}


The parameter $\epsilon$, called the \textit{privacy budget}, controls the similarity of the output distributions between databases $x$ and $y$. Smaller $\epsilon$ values ensure closer distributions, enhancing privacy but reducing accuracy, while larger $\epsilon$ values improve accuracy at the cost of weaker privacy. 
Multiple queries to the database can progressively consume the privacy budget. In this context, sequential composition comes into play when multiple mechanisms are applied to a dataset, with the total privacy budget being the sum of the budgets allocated to each computation:

\begin{theorem}
    (Sequential composition \cite{mcsherry2007mechanism}) Let $\mec_i : \domain \rightarrow \range_i$ be an $\epsilon_i$-differentially private algorithm for $i \in [k]$. Then $\mec(x)=(\mec_1(x),\cdots,\mec_k(x))$ is ($\sum_{i=1}^{k}$)-differentially private.
    \label{theorem:sequential_composition}
\end{theorem}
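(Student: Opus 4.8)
The plan is to reduce the joint guarantee to a product of the per-mechanism guarantees. First I would fix two databases $x$ and $y$ with $d(x,y)\le 1$ and an arbitrary output tuple $O=(O_1,\dots,O_k)\in\range_1\times\cdots\times\range_k$. Treating the $\mec_i$ as independent randomized maps, I would expand
\[
\prob[\mec(x)=O]=\prod_{i=1}^{k}\prob[\mec_i(x)=O_i],
\]
then invoke the $\epsilon_i$-differential privacy of each $\mec_i$ (Definition~\ref{def:dp}) to bound every factor by $\exp(\epsilon_i)\,\prob[\mec_i(y)=O_i]$, and finally collect the exponentials:
\[
\prod_{i=1}^{k}\prob[\mec_i(x)=O_i]\le\prod_{i=1}^{k}\exp(\epsilon_i)\,\prob[\mec_i(y)=O_i]=\exp\!\Big(\textstyle\sum_{i=1}^{k}\epsilon_i\Big)\,\prob[\mec(y)=O].
\]
This is exactly the inequality required by Definition~\ref{def:dp} for $\mec$ with budget $\sum_{i=1}^{k}\epsilon_i$.

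Second, I would promote this from single outcomes to arbitrary (measurable) output events $S\subseteq\range_1\times\cdots\times\range_k$, either by summing over $O\in S$ in the discrete case or, in the continuous case, by writing the probabilities as integrals of product densities and applying the same pointwise bound under the integral sign; the constant $\exp(\sum_i\epsilon_i)$ factors out of the sum or integral unchanged. A short induction on $k$ gives an equivalent, perhaps cleaner, write-up: the base case $k=1$ is the hypothesis, and the inductive step groups $\mec_1,\dots,\mec_{k-1}$ into a single $\big(\sum_{i<k}\epsilon_i\big)$-DP mechanism and applies the two-fold version of the claim.

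The one place that needs care — and the main obstacle in a fully general treatment — is the dependence structure among the mechanisms. If the composition were adaptive, i.e. $\mec_i$ allowed to take the earlier outputs $O_1,\dots,O_{i-1}$ as auxiliary input, the naive product factorization would not literally hold; one would instead condition on the prefix $(O_1,\dots,O_{i-1})$, use that $\mec_i(\cdot\,;\,O_1,\dots,O_{i-1})$ is $\epsilon_i$-DP for every fixed value of that prefix, and chain the conditional bounds. Since the statement here asserts only the non-adaptive tuple $\mec(x)=(\mec_1(x),\dots,\mec_k(x))$, the independent-product argument above suffices, and I would simply remark that the identical proof covers the adaptive case after inserting this conditioning step.
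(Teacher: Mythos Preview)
Your argument is correct and is the standard textbook proof of sequential composition. Note, however, that the paper does not supply its own proof of Theorem~\ref{theorem:sequential_composition}; it is stated as a cited background result from~\cite{mcsherry2007mechanism}, so there is nothing in the paper to compare your approach against. Your write-up matches the usual proof in the source you would be citing.
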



\subsection{Sensitivity}
\label{sec:sensitivity}

Differentially private mechanisms usually perturb the true output with noise. The amount of noise added to the true output of a non-numeric function $f: \domain \rightarrow \range$ is proportional to the \textit{sensitivity} of the utility function $u: \mathcal{D}^n \times \mathcal{R} \rightarrow \mathbb{R}$. 


\textbf{Global Sensitivity}. The global sensitivity of $u$ quantifies the maximum difference in utility scores across all possible pairs of neighboring database inputs $x, y$ and all elements $r \in \mathcal{R}$:

\begin{definition}
    (Global Sensitivity \cite{mcsherry2007mechanism}). Given a utility function $u: \mathcal{D}^n \times \mathcal{R} \rightarrow \mathbb{R}$ that takes as input a database $x \in \domain$ and an element $r \in \range$ and outputs a numeric score for $r$ in $x$. The global sensitivity of $u$ is defined as:
    $$ \Delta u = \max_{r \in \mathcal{R}} \max_{x,y | d(x,y) \leq 1} |u(x,r) - u(y,r)|.$$ 
    \label{def:global_sensitivity}
\end{definition}

Global sensitivity represents the maximum possible variation in the utility score for any $r \in \mathcal{R}$ across all neighboring database pairs, providing a worst-case measure of sensitivity for DP mechanisms.

\textbf{Local Sensitivity}. In private selection mechanisms, higher global sensitivity $\Delta u$ often leads to reduced accuracy. To address this, private solutions aim for mechanisms with lower sensitivity. The concept of local sensitivity $LS(x)$ \cite{nissim2007smooth} measures sensitivity specifically at the given input database $x$, rather than considering the entire universe of databases $\mathcal{D}^n$.

\begin{definition}
    (Local Sensitivity \cite{nissim2007smooth, farias2023local}). Given a utility function $u(x,r)$ that takes as input a database $x$ and an element $r$ and outputs a numeric score, the local sensitivity of $u$ is defined as $$ LS^{u}(x) = \max_{r \in \mathcal{R}} \max_{y| d(x,y) \leq 1}|u(x,r)-u(y,r)|$$
\end{definition}

Local sensitivity is often smaller than global sensitivity for many practical problems \cite{blocki2013differentially, karwa2011private, kasiviswanathan2013analyzing, lu2014exponential,nissim2007smooth, zhang2015private}. In such cases, real-world databases typically differ significantly from the worst-case scenarios considered in global sensitivity, resulting in a much lower observed local sensitivity. Note that global sensitivity represents the maximum local sensitivity in all databases, expressed as $\Delta u = \max_{x} LS^u(x)$.

However, relying exclusively on local sensitivity is insufficient to guarantee differential privacy. Therefore, a complementary concept, called local sensitivity at distance $t$, is introduced:

\begin{definition}\label{def:locsen_dist_t}
	(Local Sensitivity at distance $t$ \cite{nissim2007smooth, farias2023local}). Given a utility function $u: \mathcal{D}^n \times \mathcal{R} \rightarrow \mathbb{R}$ that takes as input a database $x \in \domain$ and an element $r \in \range$ and outputs a numeric score for $r$ in $x$, the local sensitivity at distance $t$ of $u$ is defined as
	$$LS^{u}(x,t) = \max_{y| d(x,y) \leq t} LS^{u}(y).$$
\end{definition}

Local sensitivity at distance $t$, $LS^u(x,t)$, captures the maximum local sensitivity $LS^u(y)$ in all databases $y$ within a distance of $t$. In other words, it considers up to $t$ modifications to the database before evaluating its local sensitivity. In particular, $LS^u(x,0) = LS^u(x)$.

A limitation of the definition of local sensitivity at distance $t$ arises when a single element in $\mathcal{R}$ exhibits a high sensitivity value (close to $\Delta u$), causing $LS^u(x, t)$ to also be large. This becomes problematic when some outliers inflate $LS^u(x, t)$ despite the fact that most elements have low sensitivity, reducing accuracy. To address this, individual sensitivity measurements can be performed for each element in $\mathcal{R}$ using the concept of element local sensitivity:

\begin{definition}
	(Element Local Sensitivity at distance $t$ \cite{farias2023local}). Given a utility function $u(x,r)$ that takes as input a database $x$ and an element $r$ and outputs a numeric score for $x$, the element local sensitivity at distance $t$ of $u$ is defined as 
	$$ LS^{u}(x,t,r) = \max_{y \in \domain| d(x,y) \leq t, z \in \domain | d(y,z) \leq 1}|u(y,r)-u(z,r)|,$$	
	where $d(x,y)$ denotes the distance between two databases.
	\label{def:item_local}
\end{definition} 

Note that we can obtain $LS^u(x, t)$ from this definition: $ LS^{u}(x,t) = \max_{r \in \mathcal{R}} LS^{u}(y,t,r)$ as $LS^{u}(x,t,r) = \max_{y| d(x,y) \leq t} LS^{u}(y,0,r)$.

\subsection{Sensitivity Functions}
\label{sec:sensitivity_functions}

Computing local sensitivity $LS^u(x,t)$ or element local sensitivity $LS^u(x,t,r)$ can be infeasible is some cases where it is an NP-hard problem \cite{nissim2007smooth, zhang2015private}. To overcome this, we can build a computationally efficient function $\delta^u(x,t,r)$ that computes as an upper bound for $LS^u(x,t)$ or $LS^u(x,t,r)$, while staying below $\Delta u$. These functions are denoted as sensitivity functions. A sensitivity function needs to have the following signature: $\delta^u: \domain \times \mathbb{N} \times \range \rightarrow \real$ and the function $\delta^u$ is said to be a sensitivity function for the utility function $u$. The $\delta^u(x,t,r)=\Delta u$, $\delta^u(x,t,r)=LS^{u}(x,t)$ and $\delta^u(x,t,r)=LS^{u}(x,t,r)$ are all sensitivity functions.

Additionally, a sensitivity function must satisfy the property of \textit{admissibility} to ensure compliance with DP in single-objective selection mechanisms. This property requires the sensitivity function to serve as an upper bound on the element local sensitivity.

\begin{definition}
	(Admissibility \cite{farias2023local}). A sensitivity function $\delta^{u}(x, t,$ $ r)$ is \textit{admissible} if:
	\begin{enumerate}
		\item $\delta^{u}(x, 0, r) \geq LS^{u}(x, 0, r)$, for all $x \in \domain$ and all $r \in \range$
		\item $\delta^{u}(x, t+1, r) \geq \delta^{u}(y, t, r)$, for all $x,y$ such that $d(x,y) \leq 1$ and all $t \geq 0$		
	\end{enumerate}
	\label{def:admissible_function}
\end{definition}

A key result is that the global sensitivity $\delta^u(x,t,r)=\Delta u$, local sensitivity at distance $t$ $\delta^u(x,t,r)=LS^u(x,t)$, and element local sensitivity at distance $t$ $\delta^u(x,t,r)=LS^u(x,t,r)$ are all admissible sensitivity functions \cite{farias2023local}. 


For the remainder of this paper, we utilize the concept of admissible sensitivity functions $\delta^u(x, t, r)$ to construct differentially private mechanisms, as they are generic functions capable of representing various sensitivity definitions or serving as upper bounds for these sensitivities.

\section{Related Work}
\label{sec:related_work}

The field of differentially private multi-objective selection remains unexplored. To the best of our knowledge, this work is the first to tackle this problem. Nonetheless, we review the literature on related works in differential privacy for single-objective selection.



\subsection{Exponential Mechanism}

The exponential mechanism $\mecexp$ \cite{mcsherry2007mechanism} is one of the most widely used approaches for ensuring differential privacy in single-objective selection problems. It leverages the concept of global sensitivity $\Delta u$ (see Definition \ref{def:global_sensitivity}) to manage privacy guarantees.

This mechanism selects a candidate from the set $\mathcal{R}$ by sampling an element $r \in \mathcal{R}$ with probability proportional to its utility score $u(x, r)$ computed on the database $x$. By employing an exponential distribution, it assigns probabilities to each $r \in \mathcal{R}$ based on their utility scores, ensuring DP. The formal definition is as follows:

\begin{definition}
	(Exponential Mechanism \cite{mcsherry2007mechanism}). The exponential mechanism $\mecexp(x, \epsilon, u, \Delta u, \mathcal{R})$ selects and outputs an element $r \in \mathcal{R}$ with probability proportional to $\exp \big(\frac{\epsilon \; u(x, r)}{2 \Delta u}\big)$.
	\label{def:exponential_mechanism}
\end{definition}


The exponential mechanism satisfies $\epsilon$-differential privacy \cite{mcsherry2007mechanism}. 




\subsection{Permute-and-Flip}

\
The \textit{permute-and-flip} mechanism $\mecpf$ \cite{mckenna2020permute} is an alternative to the exponential mechanism for the task of differentially private selection. It preserves the properties of the exponential mechanism while offering improved performance, with an expected error that is never higher and can be reduced by up to half. This makes it a robust alternative for single-objective selection tasks under DP.


The mechanism operates by randomly permuting the elements of $\mathcal{R}$ and flipping a biased coin for each element, with the bias determined by the utility score $u(x, r)$ relative to the maximum utility $u^*$. This design prioritizes elements with higher utility scores, increasing the likelihood of selecting optimal candidates. Importantly, the mechanism guarantees termination since elements with utility equal to $u^*$ always have a coin flip probability of 1.


\subsection{Report-noisy-max}

The \textit{report-noisy-max} mechanism $\mecrnm$ \cite{dwork2014algorithmic} is a versatile approach for private selection problems. It adds independent numeric noise to the utility score of each candidate and selects the candidate with the highest noisy score. Formally, the mechanism is defined as:

\[
\mecrnm(x, \epsilon, u, \Delta u, \mathcal{R}) = \arg\max_{r \in \mathcal{R}} \big( u(x, r) + Z \big),
\]

\noindent where $Z$ represents noise drawn from a specified distribution.

The mechanism supports various noise distributions, including Laplace, Gumbel, and Exponential. Initially, it was proposed using the Laplace distribution, with $Z \sim \texttt{Lap}\left( \frac{\epsilon}{2\Delta u} \right)$, ensuring a balance between privacy and utility. 

Additionally, $\mecrnm$ can utilize the Exponential distribution, where $Z \sim \texttt{Expo}\left( \frac{\epsilon}{2\Delta u} \right)$ \cite{ding2021permute}. Under this configuration, it becomes equivalent to the \textit{permute-and-flip} mechanism \cite{mckenna2020permute}. When employing the Gumbel distribution, $\texttt{Gumbel}\left( \frac{2\Delta u}{\epsilon} \right)$, the \textit{report-noisy-max} mechanism mirrors the exponential mechanism \cite{durfee2019practical}. 

\subsection{Local Dampening Mechanism}

The local dampening mechanism $\mecld$ \cite{farias2023local} is a single-objective selection approach that leverages local sensitivity to minimize the noise introduced during selection. By dampening the utility scores of candidates, it improves the signal-to-noise ratio, resulting in reduced noise. The mechanism is formally defined as follows:

\begin{definition}
	(Local dampening mechanism). The local dampening mechanism $\mecld(x, \epsilon, u, \delta^{u}, \mathcal{R})$ selects and outputs an element $r \in \mathcal{R}$ with probability proportional to $\exp \big(  \frac{\epsilon \; D_{u,\delta^u}(x, r)}{2}\big) $.
\end{definition}

\noindent Here $D_{u,\delta^u}(x,r)$ is the dampening function defined as:

\begin{definition}
	(Dampening function). Given a utility function $u(x, r)$ and an admissible function $\delta^u(x, t, r)$, the dampening function $D_{u,\delta^u}(x,r)$ is defined as a piecewise linear interpolation over the points:	    
	$$< \ldots,(b(x,-1,r), -1),(b(x,0,r), 0),(b(x,1,r), 1), \ldots >$$
	where $b(x, i, r)$ is given by:		    	    
	\begin{equation*}	
		b(x, i, r) \coloneqq
		\begin{cases}
			\sum^{i-1}_{j=0} \delta^u(x, j, r) & \text{if} \ i > 0 \\
			0 & \text{if} \ i = 0 \\
			- b(x, -i, r)                    & \text{otherwise}     
		\end{cases}
	\end{equation*}
	Therefore,	    
	\begin{equation*}
		D_{u,\delta^u}(x, r) =  \frac{u(x, r)-b(x,i,r)}{b(x, i+1, r)- b(x, i, r)} + i
	\end{equation*}					
	with $i$ being defined as the smallest integer such that $u(x, r) \in  \left[ b(x,i,r), b(x,i+1,r)  \right)$.
\end{definition}

\section{PrivPareto Mechanism}
\label{sec:privpareto}

We introduce the PrivPareto mechanism, a multi-objective differentially private algorithm for selecting an approximate Pareto optimal solution from a set of candidate solutions.


We propose a high level score based on the Pareto dominance relation between the candidates called \textit{Pareto Score} $PS(x,r)$. This score uses the notion of domination, i.e., a candidate $r$ dominates another candidate $r'$ if $r$ is better than $r'$ in all utility functions in a given dataset $x$. The Pareto Score indicates how dominated a given element is by the other elements in the set. We discuss it in detail in Section \ref{sec:pareto_score}.

The PrivPareto mechanism is composed by three main phases: (i) Pareto Score computation; (ii) Pareto score sensitivity computation and (iii) private selection of the approximate Pareto optimal solution.

The first step is the computation of the Pareto score $PS(x,r)$ for each candidate $r \in \range$ given the input database $x$. In the second step, we compute the sensitivity of the Pareto score to use in the next step, the selection phase. In the third step, we select the approximate Pareto optimal solution based on the Pareto Score using a single objective differentially private selection mechanism. 

The choice of the single objective differentially private selection mechanism in the third step plays an important role on the accuracy of the our mechanism since each mechanism adds a different amount of noise to the output. These mechanisms can use global or local sensitivity. Global sensitivity based algorithms are more frequent in the literature \cite{mcsherry2007mechanism, mckenna2020permute}. However, local sensitivity based mechanisms have been shown to perform better in applications where the local sensitivity is smaller than the global sensitivity \cite{farias2023local} which is the case for the Pareto Score. Thus, we present the PrivPareto mechanism in two versions: a global sensitivity version and a local sensitivity version.

The global sensitivity version of the PrivPareto mechanism is defined in the Definition \ref{def:global_privpareto}.

\begin{definition}

    (Global PrivPareto Mechanism) The global PrivPareto mechanism $PrivPareto_{global} (x, \epsilon, u_1, \ldots, u_m, \mathcal{R})$ takes as input a database $x \in \domain$, the privacy budget $\epsilon$, the utility functions $u_1, u_2, \ldots, u_m: \domain \times \range \rightarrow \real$, the set of candidate solutions $\range$, and outputs an approximate Pareto optimal solution $r \in \range$. The mechanism is defined as follows:    
    $$ PrivPareto_{global}(x, \epsilon, u_1, \ldots, u_m, \mathcal{R}) = \mec_{global}(x, \epsilon, PS,\Delta PS, \mathcal{R}). $$

\label{def:global_privpareto}
\end{definition}

\noindent Here $\mec_{global}$ is a single objective differentially private selection mechanism that uses the global sensitivity as the exponential mechanism, permute-and-flip mechanism and the report-noisy-max mechanism (see Section \ref{sec:related_work}), $\Delta PS$ is the global sensitivity of the Pareto Score, and $PS$ is the Pareto Score. We discuss the global sensitivity of the Pareto Score in Section \ref{sec:sensitivity_pareto_score}.

The local sensitivity version of the PrivPareto mechanism is given in Definition \ref{def:local_privpareto}:

\begin{definition}

    (Local PrivPareto Mechanism) The local PrivPareto mechanism $PrivPareto_{local} (x, \epsilon, u_1, \ldots, u_m, \delta^{u_1}, \ldots, \delta^{u_m}, \mathcal{R})$ takes as input a database $x \in \domain$, the privacy budget $\epsilon$, the set of utility functions $u_1, u_2, \ldots, u_m: \domain \times \range \rightarrow \real$, the admissible sensitivity functions $\delta^{u_1} , \ldots, \delta^{u_m}$ of the utility functions, a set of candidate solutions $\range$, and outputs an approximate Pareto optimal solution $r \in \range$. The mechanism is defined as follows:
    \begin{multline*}
        PrivPareto_{local} (x, \epsilon, u_1, \ldots, u_m, \delta^{u_1}, \ldots, \delta^{u_m}, \mathcal{R})  \\ = \mec_{local}(x, \epsilon, PS, \delta^{PS}, \mathcal{R}).  
    \end{multline*}

\label{def:local_privpareto}
\end{definition}

\noindent Here $\mec_{local}$ is a single objective differentially private selection mechanism that uses the local sensitivity, $\delta^{PS}$ is the sensitivity function of the Pareto Score (discussed in Section \ref{sec:sensitivity_pareto_score}), and $PS$ is the Pareto Score. Note that the admissible sensitivity functions $\delta^{u_1}, \ldots, \delta^{u_m}$ and $\delta^{PS}$ can represent the global sensitivity, the local sensitivity at distance t or the element local sensitivity at distance t (see Sections \ref{sec:sensitivity} and \ref{sec:sensitivity_functions}).



\subsection{Pareto Score}
\label{sec:pareto_score}

The Pareto Score is a numeric score that represents how close a candidate solution is to the Pareto frontier. The Pareto Score is computed based on the utility functions $u_1, u_2, \ldots, u_m$ and the database $x$. 

The Pareto Score is a high level score that uses the notion of domination, i.e., a candidate $r$ dominates another candidate $r'$ if $r$ is larger than $r'$ in all utility functions. The Pareto Score indicates how dominated a given element is by the other elements in the set. We say that $a$ is Pareto optimal if there is no other candidate that dominates $a$. The set of all Pareto optimal candidates is called the Pareto front.

We say the a candidate $a$ \textit{dominates} a candidate $b$ on a database $x$, $a \succeq^x b$, if and only if $u_i(x, a) \geq u_i(x, b)$ for all $i \in [m]$. It is said that $a \nsucceq^x b$ if $a$ does not dominate $b$. Thus, we define the Pareto Score as follows:

\begin{definition}
    (Pareto Score). Given a database $x \in \domain$, a set of utility functions $u_1, u_2, \ldots, u_m: \domain \times \range \rightarrow \real$, and a candidate solution $r \in \range$, the Pareto Score $PS(x,r)$ is defined as the number of candidates that dominate $r$ in the set $\range$:
    $$ PS(x,r) = -|\{ r' \in \range \; | \; r' \succeq^x r \}|$$
    \label{def:pareto_score}
\end{definition}

Note that the Pareto score is a non-positive number. It ranges from $0$, for a Pareto optimal candidate, to $-(|\range|-1)$, for an candidate dominated by all the other candidates. 


\begin{example}
    Consider a database $x$, a set of candidates $\range = \{a, b, c, d, e\}$ and two utility functions $u_1, u_2: \domain \times \range \rightarrow \real$. The utility functions are defined as follows:
    \begin{align*}
        u_1(x, a) = 3, u_1(x, b) = 5, u_1(x, c) = 4, u_1(x, d) = 2, u_1(x, e) = 1 \\
        u_2(x, a) = 5, u_2(x, b) = 3, u_2(x, c) = 2, u_2(x, d) = 4, u_2(x, e) = 1
    \end{align*}
    As depicted in Figure \ref{fig:pareto_example}, the score for Pareto optimal $a$ and $b$ is $0$ since they are dominated by no other candidate. The score for $c$ is $-1$ since it is dominated only by $b$. The score for $d$ is $-1$ since it is dominated by $a$. The score for $e$ is $-4$ since it is dominated by all the other candidates.
    
    \begin{figure}
    \centering

  \begin{tikzpicture}[
    thick,
    >=stealth',
    dot/.style = {
      draw,
      fill = white,
      circle,
      inner sep = 0pt,
      minimum size = 14pt
    }
  ]
  \coordinate (O) at (0,0);

  \draw[->] (-0.3,0) -- (6,0) coordinate[label = {below:$u_1(x,r)$}];
  \draw[->] (0,-0.3) -- (0,6) coordinate[label = {right:$u_2(x,r)$}];

  \foreach \x [count=\xi from 1] in {1, 2, 3, 4, 5}{
    \draw [-, thick] (\x,0.1) -- (\x,-0.1);
    \draw [-, thick] (0.1, \x) -- (-0.1, \x);
    \node at (\x, -0.3) { $\x$};
    \node at (-0.3, \x) { $\x$};
  }

  \draw [black, fill=gray!20] plot [smooth cycle] coordinates {(5.0,2.5) (5.5,3.0) (3,5.5) (2.5,5) };
  
  \node [rotate=-45] at (4.5, 4.5) {Pareto Front};
  
  \foreach \x/\y/\score/\label [count=\xi from 1] in {3/5/0/a, 5/3/0/b, 4/2/-1/c, 2/4/-1/d, 1/1/-4/e}{
    \draw [dashed] (\x,\y) -- (0,\y);
    \draw [dashed] (\x,\y) -- (\x,0);
    \draw (\x,\y) node[dot] {\score};
    \node at (\x-.35, \y+0.35) {$\label$};
  }
  
\end{tikzpicture}

\caption{Example Pareto scores for 5 elements and 2 utility functions. The left lower subspace of each candidate, delimited by the two dashed lines, contains its dominated elements. For instance, candidate $a$ dominates $d$ and $e$.}
\label{fig:pareto_example}
\end{figure}
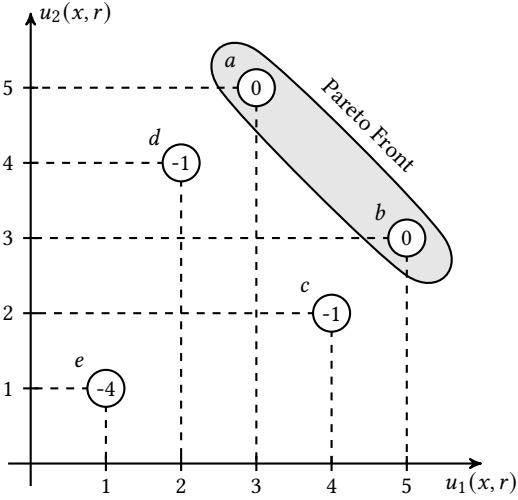

\end{example}

\subsection{Pareto Score Sensitivity}
\label{sec:sensitivity_pareto_score}

Here, we discuss the global and local sensitivity for Pareto Score that are used in the PrivPareto mechanism.

\paragraph{Global Sensitivity} The global sensitivity is $|\range|-1$ for an arbitrary utility function. The proof of this claim is straightforward given that the range of the Pareto score is $[-(|\range|-1), 0]$ and the score of a candidate in a dataset can change from $0$ to $|\range|-1$ in a neighboring dataset. 

\paragraph{Local Sensitivity} For specific utility functions, the global and local sensitivities can be smaller. For this, we show how to compute the local sensitivity of the Pareto score using the local sensitivity of the underlying sensitivity functions. More specifically, we present the admissible sensitivity function $\delta^{PS}(x,t,r)$ that computes an upper bound on the element local sensitivity of the Pareto Score based on the underlying admissible sensitivity functions $\delta^{u_1}(x,t,r), \ldots, \delta^{u_m}(x,t,r)$ of the underlying utility functions.

\begin{definition} [Admissible Function for Pareto Score PS]
    $$\delta^{PS}(x,t,r) = |dom^-_{x,t,r}| + |ndom^+_{x,t,r}|$$
    where:
    \begin{align*}
        dom^-_{x,t,r} & = \{ r' \in dom_{x,r} \; | \; \exists i \in [m] \; s.t. \; u_i^{-t}(x,r') \leq  u_i^{+t}(x,r) \} \\
        ndom^+_{x,t,r} & = \{ r' \in ndom_{x,r} \; | \; u_i^{+t}(x,r')  \geq  u_i^{-t}(x,r), \forall i \in [m] \} \\
        dom_{x,r} & = \{ r' \in \range \; | \; r' \succeq^x r\} \\
        ndom_{x,r} & = \{ r' \in \range \; | \; r' \nsucceq^x r\}
    \end{align*}
    where $u_i^{+t}(x,r)$ be a shorthand for $u_i(x,r) + \sum_{i=0}^{t}\delta^{u_i}(x,i,r)$, $u_i^{-t}(x,r)$ be a shorthand for $u_i(x,r) - \sum_{i=0}^{t}\delta^{u_i}(x,i,r)$ and $[m]$ denote the set $\{1,2,\cdots,m\}$. Note that, using this shorthand, $PS(x,r)=-|dom_{x,r}|$.

    \label{def:delta_ps}
\end{definition}

The set $dom_{x,r}$ contains all the candidates that dominate $r$ in the dataset $x$ and the set $ndom_{x,r}$ contains all the candidates that are not dominated by $r$ in the dataset $x$. The set $dom^-_{x,t,r}$ contains the candidates in $dom_{x,r}$ that can become non-dominated by $r$ in a dataset at distance $t$. Likewise, the set $ndom^+_{x,t,r}$ contains the candidates in $ndom_{x,r}$ that can dominate $r$ in a dataset at distance $t$. The function $\delta^{PS}$ returns the maximum number of candidates that can change their domination status in a dataset at distance $t$.

\begin{example}
    Consider a database $x$, a set of candidates $\range = \{a, b, c\}$ and two utility functions $u_1, u_2: \domain \times \range \rightarrow \real$. The utility functions are defined as follows: 
    \begin{align*}
        u_1(x, a) = 1, u_1(x, b) = 3, u_1(x, c) = 5 \\
        u_2(x, a) = 1, u_2(x, b) = 3, u_2(x, c) = 5
    \end{align*}
    The sensitivity functions at distance $a$ are defined as follows:
    \begin{align*}
        \delta^{u_1}(x,0,a) = 0.5, \delta^{u_2}(x,0,a) = 0.5 \\
        \delta^{u_1}(x,0,b) = 1, \delta^{u_2}(x,0,b) = 1 \\
        \delta^{u_1}(x,0,c) = 1.5, \delta^{u_2}(x,0,c) = 1.5
    \end{align*}

    Thus, to compute the sensitivity function $\delta^{PS}(x,0,b)$, we have:
    \begin{align*}
        dom_{x,r} = \{ c \}, & \; ndom_{x,r} = \{ a \} \\
        dom^-_{x,0,b}  = \{ c \}, & \; ndom^+_{x,0,b}  = \emptyset \\
        \delta^{PS}(x,0,b)  =  &|dom^-_{x,0,b}| + |ndom^+_{x,0,b}| = 1
    \end{align*}
\end{example}

\begin{restatable}{theorem}{theoremadmissibledeltapareto}
    Given utility functions $u_1,\cdots,u_m$, admissible sensitivity functions $\delta^{u_1},\cdots,\delta^{u_m}$ and the Pareto score $PS$ defined over the utility function $u_1,\cdots,u_m$. The sensitivity function $\delta^{PS}(x,t,r)$ is admissible.
    \label{theo:admissible_delta_ps}
\end{restatable}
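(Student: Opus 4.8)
The plan is to verify the two defining conditions of admissibility (Definition~\ref{def:admissible_function}) for $\delta^{PS}$ directly, after first isolating a \emph{nesting lemma} about the uncertainty intervals $\big[u_i^{-t}(x,r),\,u_i^{+t}(x,r)\big]$. The lemma asserts that for any $x,y$ with $d(x,y)\le 1$, any $i\in[m]$, any candidate $r$, and any $t\ge 0$, we have $u_i^{-(t+1)}(x,r)\le u_i^{-t}(y,r)$ and $u_i^{+(t+1)}(x,r)\ge u_i^{+t}(y,r)$, i.e., the distance-$(t+1)$ interval around $x$ encloses the distance-$t$ interval around $y$. This follows directly from the admissibility of each $\delta^{u_i}$: splitting off the $j=0$ term of the sum and using condition~(1), $\delta^{u_i}(x,0,r)\ge LS^{u_i}(x,0,r)\ge |u_i(x,r)-u_i(y,r)|$, absorbs the shift from $u_i(x,r)$ to $u_i(y,r)$, while condition~(2), $\delta^{u_i}(x,j+1,r)\ge\delta^{u_i}(y,j,r)$, bounds the remaining sum term by term. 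Intuitively the lemma says $\big[u_i^{-t}(x,r),u_i^{+t}(x,r)\big]$ is a guaranteed enclosure of $u_i(w,r)$ for every database $w$ within distance $t$ of $x$.

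For condition~(1) I would fix $r$ and a neighbor $z$ of $x$ and bound $|PS(x,r)-PS(z,r)|=\big||dom_{x,r}|-|dom_{z,r}|\big|$ by the number of candidates that change their domination status with respect to $r$ between $x$ and $z$. Writing $L$ for the candidates leaving $dom$ and $E$ for those entering it, $|dom_{z,r}|=|dom_{x,r}|-|L|+|E|$, so the quantity is $\big||L|-|E|\big|\le |L|+|E|$. If $r'\in L$ then $r'\in dom_{x,r}$ and some coordinate $i$ satisfies $u_i(z,r')<u_i(z,r)$; the $t=0$ case of the nesting lemma gives $u_i^{-0}(x,r')\le u_i(z,r')<u_i(z,r)\le u_i^{+0}(x,r)$, so that coordinate witnesses $r'\in dom^-_{x,0,r}$. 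Symmetrically every $r'\in E$ lies in $ndom^+_{x,0,r}$. Since $dom^-_{x,0,r}\subseteq dom_{x,r}$ and $ndom^+_{x,0,r}\subseteq ndom_{x,r}$ are disjoint, $|L|+|E|\le \delta^{PS}(x,0,r)$, hence $LS^{PS}(x,0,r)\le\delta^{PS}(x,0,r)$ (cf.\ Definition~\ref{def:item_local}).

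For condition~(2) the target is $\delta^{PS}(y,t,r)\le\delta^{PS}(x,t+1,r)$ whenever $d(x,y)\le 1$. Because $dom^-$ and $ndom^+$ are always disjoint, it suffices to prove the set inclusion $dom^-_{y,t,r}\cup ndom^+_{y,t,r}\subseteq dom^-_{x,t+1,r}\cup ndom^+_{x,t+1,r}$ and then compare cardinalities. I would take an arbitrary $r'$ in the left-hand set and split into four cases according to whether $r'$ lies in $dom_{y,r}$ or $ndom_{y,r}$ and, independently, whether it lies in $dom_{x,r}$ or $ndom_{x,r}$. In each case one feeds the membership inequality valid at $y$ — either an inequality between the $u_i^{\pm t}(y,\cdot)$ values, or, after a move between $dom$ and $ndom$, the raw domination inequalities among the $u_i(y,\cdot)$ — through the nesting lemma's bounds $u_i^{-(t+1)}(x,\cdot)\le u_i^{-t}(y,\cdot)$ and $u_i^{+(t+1)}(x,\cdot)\ge u_i^{+t}(y,\cdot)$ (together with the trivial $u_i^{-t}(y,\cdot)\le u_i(y,\cdot)\le u_i^{+t}(y,\cdot)$) to recover the required membership inequality at $x$. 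Conceptually, a candidate that can flip its domination status within distance $t$ of $y$ can certainly flip it within distance $t+1$ of $x$.

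The main obstacle is exactly this case analysis: the split of $\range$ into dominators and non-dominators of $r$ is database-dependent, so $r'$ may belong to $dom_{y,r}$ yet to $ndom_{x,r}$ (or vice versa), and in those ``cross'' cases $r'$ must be shown to land in the \emph{other} witness set on the $x$-side ($ndom^+_{x,t+1,r}$ instead of $dom^-_{x,t+1,r}$, or vice versa). The delicate points are keeping the direction of every inequality correct and spending the $j=0$ term of each $\delta^{u_i}$-sum exactly once, on bridging $u_i(x,\cdot)$ to $u_i(y,\cdot)$; the rest is bookkeeping. A convention worth fixing at the outset is whether $r$ is counted in $dom_{x,r}$: since $r$ dominates itself in every database it never changes status, so either convention is sound for admissibility, but excluding it keeps $\delta^{PS}$ from being needlessly loose by one (and, incidentally, bounded above by $\Delta PS=|\range|-1$).
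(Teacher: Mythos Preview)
Your proposal is correct and follows essentially the same approach as the paper: both verify the two admissibility conditions directly, both handle condition~(2) via the same four-way case split on whether $r'$ lies in $dom$/$ndom$ at $x$ versus at $y$, and both rely on the same chain of inequalities coming from the admissibility of each $\delta^{u_i}$. The only differences are organizational: you isolate the interval-enclosure inequalities as a standalone ``nesting lemma'' and phrase condition~(2) as a set inclusion $dom^-_{y,t,r}\cup ndom^+_{y,t,r}\subseteq dom^-_{x,t+1,r}\cup ndom^+_{x,t+1,r}$, whereas the paper derives the enclosure inequalities inline and argues by decomposing $\delta^{PS}(x,t+1,r)$ into four cardinality terms and lower-bounding each.
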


The full proof is deferred to \iftoggle{vldb}{our technical report \cite{ourtechnicalreport}}{the appendix} and, here, we provide a sketch:

\begin{proof}
    (Sketch) The main section of the proof is to show that $\delta^{PS}(x, t+1, r) \geq \delta^{PS}(y, t, r)$, for all $x,y$ such that $d(x,y) \leq 1$ and all $t \geq 1$. Given two neighboring datasets $x,y$, $d(x,y) \leq 1$, a distance $t$ and a candidate $r$, we have:
    \begin{align}
        & \delta^{PS}(x, t+1, r) = |dom^-_{x,t+1,r}| + |ndom^+_{x,t+1,r}| \nonumber \\
        & = |\{ r' \in dom_{x,r} \; | \; \exists i \in [m] \; s.t. \; u_i^{-(t+1)}(x,r') \leq  u_i^{+(t+1)}(x,r) \}| \nonumber \\
        &  + |\{ r' \in ndom_{x,r} \; | \; u_i^{+(t+1)}(x,r')  \geq  u_i^{-(t+1)}(x,r), \forall i \in [m] \} \nonumber \\
        & = |\{ r' \in (dom_{x,r}\cap dom_{y,r}) \; | \exists i \in [m] \; s.t. \nonumber \\
        & \quad \quad \quad \quad \quad \quad \quad u_i^{-(t+1)}(x,r') \leq  u_i^{+(t+1)}(x,r) \}| \nonumber \\
        & + |\{ r' \in (dom_{x,r}\cap ndom_{y,r}) \; | \; \exists i \in [m] \; s.t. \nonumber\\
        & \quad \quad \quad \quad \quad \quad \quad u_i^{-(t+1)}(x,r') \leq  u_i^{+(t+1)}(x,r) \}| \nonumber \\     
        & + |\{ r' \in (ndom_{x,r} \cap dom_{y,r})  \; | \nonumber \\
        & \quad \quad \quad \quad \quad \quad \quad u_i^{+(t+1)}(x,r')  \geq  u_i^{-(t+1)}(x,r), \forall i \in [m] \}| \nonumber \\
        & + |\{ r' \in (ndom_{x,r} \cap ndom_{y,r})  \; | \nonumber \\ 
        & \quad \quad \quad \quad \quad \quad \quad  u_i^{+(t+1)}(x,r')  \geq  u_i^{-(t+1)}(x,r), \forall i \in [m] \}|. \nonumber
    \end{align}

    The last equality is true since $dom_{x,r} = (dom_{x,r}\cap dom_{y,r}) \cup (dom_{x,r}\cap ndom_{y,r})$ as $dom_{y,r} \cup ndom_{y,r} = \range$  which also implies that $ndom_{y,r} = (ndom_{x,r}\cap dom_{y,r}) \cup (ndom_{x,r}\cap dom_{y,r})$. Also, note that $(dom_{x,r}\cap dom_{y,r}) \cap (dom_{x,r}\cap ndom_{y,r}) = \emptyset$ as $dom_{y,r} \cap ndom_{y,r} = \emptyset$ and $(ndom_{x,r}\cap dom_{y,r}) \cap (ndom_{x,r}\cap ndom_{y,r}) = \emptyset$ for the same reason.

    Note that the terms $dom_{x,r}\cap dom_{y,r}$ contains the candidates that are dominated by $r$ in both datasets $x$ and $y$. The terms $dom_{x,r}\cap ndom_{y,r}$ contains the candidates that are dominated by $r$ in $x$ and not dominated by $r$ in $y$. The terms $ndom_{x,r}\cap dom_{y,r}$ contains the candidates that are not dominated by $r$ in $x$ and dominated by $r$ in $y$. The terms $ndom_{x,r}\cap ndom_{y,r}$ contains the candidates that are not dominated by $r$ in both datasets $x$ and $y$. After that, we find the lower bound for these terms to prove the initial claim. For completeness, one also need show that $\delta^{PS}(x, 0, r) \geq LS^{PS}(y, 0, r)$.
\end{proof}

\section{PrivAgg Mechanism}
\label{sec:privagg}

We present the PrivAgg mechanism, a multi-objective differentially private algorithm for selecting the best candidate from a set of solutions that maximize the aggregate of the multiple objective functions, denoted as the utility functions $u_1, \cdots , u_m$. 

In this approach, we aggregate the utility functions into a single utility function $u_{agg}(x,r)$. More specifically, we define the aggregate utility function $u_{agg}(x,r)$ as the weighted sum of the utility functions $u_{agg} = \sum_{i=1}^{m} w_i \cdot u_i(x,r)$, where $w_i$ is the weight of the utility function $u_i$. The weights can be defined by the data querier. 

Therefore, the PrivAgg mechanism outputs the candidate $r$ that maximizes the aggregate utility function $u_{agg}(x,r)$. Like the PrivPareto mechanism (Section \ref{sec:privpareto}), it has two versions: a global sensitivity version and a local sensitivity version. The global sensitivity version is used when only the global sensitivities of the underlying utility functions $u_1, \cdots , u_m$ are known. Otherwise, the local sensitivity version is used. The global sensitivity version of the PrivAgg mechanism is defined as follows:

\begin{definition}
    (PrivAgg Mechanism - Global Sensitivity). The global PrivAgg mechanism $PrivAgg_{global} (x, \epsilon, u_1, \ldots, u_m, \mathcal{R})$ takes as input a database $x \in \domain$, the privacy budget $\epsilon$, the utility functions $u_1, u_2, \ldots, u_m: \domain \times \range \rightarrow \real$, the set of candidate solutions $\range$, and outputs a candidate $r \in \range$. The mechanism is defined as follows:    
    $$ PrivAgg_{global}(x, \epsilon, u_1, \ldots, u_m, \mathcal{R}) = \mec_{global}(x, \epsilon, u_{agg},\Delta {u_agg}, \mathcal{R}) $$
    where $u_{agg}(x,r) = \sum_{i=1}^{m} w_i \cdot u_i(x,r)$ is the aggregate utility function, $\Delta u_{agg}$ is the global sensitivity of the aggregate utility function, and $\mec_{global}$ is a single objective differentially private mechanism.
\end{definition}

The local sensitivity version of the PrivAgg mechanism is defined as follows:

\begin{definition}
    (PrivAgg Mechanism - Local Sensitivity). The local PrivAgg mechanism $PrivAgg_{local} (x, \epsilon, u_1, \ldots, u_m, \delta^{u_1}, \ldots, \delta^{u_m}, \mathcal{R})$ takes as input a database $x \in \domain$, the privacy budget $\epsilon$, the set of utility functions $u_1, u_2, \ldots, u_m: \domain \times \range \rightarrow \real$, the admissible sensitivity functions $\delta^{u_1} , \ldots, \delta^{u_m}$ of the utility functions, a set of candidate solutions $\range$, and outputs a candidate $r \in \range$. The mechanism is defined as follows:
    \begin{multline}
        PrivAgg_{local} (x, \epsilon, u_1, \ldots, u_m, \delta^{u_1}, \ldots, \delta^{u_m}, \mathcal{R})  \\ = \mec_{local}(x, \epsilon, u_{agg}, \delta^{u_{agg}}, \mathcal{R}).
    \end{multline}
\end{definition}


\subsection{Sensitivity Analysis}

In this subsection we show how to compute an upper bound of the global sensitivity as a function of the underlying utility functions $u_1, \cdots , u_m$ (Theorem \ref{teo:global_sens_agg}). One can safely use the upper bound of the global sensitivity as sensitivity of a global sensitivity based mechanism to ensure $\epsilon$-differential privacy. 

Also, we show how to build an admissible sensitivity function representing the upper bound of the local sensitivity to use on local sensitivity selection algorithms (Theorem \ref{teo:admissible_agg_sens}).

\begin{theorem}
    [Upper Bound on the Global Sensitivity of Aggregate Utility Function]. Let $u_1, \ldots, u_m: \domain \times \range \rightarrow \real$ be utility functions and $w_1, \ldots, w_m$ be weights. We have

    $$ \Delta u_{agg} \leq \sum_{i=1}^{m} |w_i|.\Delta u_i, $$

   \noindent where $\Delta u_i$ is the global sensitivity of the utility function $u_i$.
    \label{teo:global_sens_agg}
\end{theorem}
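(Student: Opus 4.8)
The plan is to work directly from the definition of global sensitivity (Definition~\ref{def:global_sensitivity}) applied to $u_{agg}$, and reduce the bound to the triangle inequality together with the individual global sensitivities $\Delta u_i$. First I would fix an arbitrary candidate $r \in \range$ and an arbitrary pair of neighboring databases $x,y$ with $d(x,y) \leq 1$, so that it suffices to bound $|u_{agg}(x,r) - u_{agg}(y,r)|$ uniformly over all such $r$, $x$, $y$ and then take the maximum.

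The key computation is to expand $u_{agg}$ by its definition $u_{agg}(x,r) = \sum_{i=1}^{m} w_i \, u_i(x,r)$, write
$$
|u_{agg}(x,r) - u_{agg}(y,r)| = \left| \sum_{i=1}^{m} w_i \bigl( u_i(x,r) - u_i(y,r) \bigr) \right|,
$$
and then apply the triangle inequality to pull the sum outside the absolute value, giving $\sum_{i=1}^{m} |w_i| \cdot |u_i(x,r) - u_i(y,r)|$. Each term is then bounded by $|w_i| \cdot \Delta u_i$, since $|u_i(x,r) - u_i(y,r)| \leq \Delta u_i$ holds for every $r$ and every neighboring pair $x,y$ by Definition~\ref{def:global_sensitivity} applied to $u_i$. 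This yields $|u_{agg}(x,r) - u_{agg}(y,r)| \leq \sum_{i=1}^{m} |w_i| \Delta u_i$ for the fixed $r,x,y$.

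Finally, since the right-hand side does not depend on $r$, $x$, or $y$, taking the maximum over all $r \in \range$ and all $x,y$ with $d(x,y)\leq 1$ on the left-hand side preserves the inequality, which is exactly $\Delta u_{agg} \leq \sum_{i=1}^{m} |w_i|\,\Delta u_i$. There is no real obstacle here: the statement is an immediate consequence of linearity of $u_{agg}$ in the $u_i$ and the triangle inequality, and the only point worth noting is the appearance of $|w_i|$ rather than $w_i$, which is needed precisely because weights may be negative and the absolute value must be distributed carefully. (If one later wants the bound to be tight, one would exhibit utility functions whose worst-case perturbations align in sign with the $w_i$, but tightness is not claimed in the statement.)
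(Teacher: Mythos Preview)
Your proposal is correct and follows essentially the same approach as the paper: both expand $u_{agg}$ by linearity, apply the triangle inequality to pull the sum outside the absolute value, and then bound each term by $|w_i|\,\Delta u_i$. The only cosmetic difference is that you fix $r,x,y$ first and take the maximum at the end, whereas the paper carries the $\max$ through the chain of equalities and inequalities; the content is identical.
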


\begin{proof}
    Let $u_1, \ldots, u_m: \domain \times \range \rightarrow \real$ be utility functions and $w_1, \ldots, w_m$ be weights. We have that

    \begin{align*}
        \Delta u_{agg} & = \max_{x,y \in \domain} \max_{r \in \range} |u_{agg}(x,r) - u_{agg}(y,r)| \\
        & = \max_{x,y \in \domain} \max_{r \in \range} \left| \sum_{i=1}^{m} w_i.u_i(x,r) - \sum_{i=1}^{m} w_i.u_i(y,r) \right| \\
        & = \max_{x,y \in \domain} \max_{r \in \range} \left| \sum_{i=1}^{m} w_i.(u_i(x,r) - u_i(y,r)) \right| \\
        & \leq \sum_{i=1}^{m} |w_i|.\max_{x,y \in \domain} \max_{r \in \range} |u_i(x,r) - u_i(y,r)| = \sum_{i=1}^{m} |w_i|.\Delta u_i.
    \end{align*}
\end{proof}

\begin{theorem} [Admissible Sensitivity Function of Weighted Sum]
    Let $u_1, \ldots, u_m: \domain \times \range \rightarrow \real$ be utility functions, $\delta^{u_1}, \ldots, \delta^{u_m}$ be admissible sensitivity functions, and $w_1, \ldots, w_m$ be weights. Then an admissible sensitivity of the aggregate utility function $u_{agg}(x,r)$ is given by

    $$ \delta^{u_{agg}}(x,t,r) = \sum_{i=1}^{m} |w_i|.\delta^{u_i}(x,t,r), $$

    \noindent where $\delta^{u_i}$ is the admissible sensitivity function of the utility function $u_i$.
    \label{teo:admissible_agg_sens}
\end{theorem}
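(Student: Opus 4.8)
The plan is to verify directly the two conditions of admissibility (Definition \ref{def:admissible_function}) for $\delta^{u_{agg}}(x,t,r) = \sum_{i=1}^{m} |w_i|\cdot\delta^{u_i}(x,t,r)$, reducing each condition to the corresponding property of the underlying $\delta^{u_i}$ by means of the triangle inequality together with the nonnegativity of the coefficients $|w_i|$.

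For condition (1) I would bound the element local sensitivity of $u_{agg}$ at distance $0$. Fix $x \in \domain$ and $r \in \range$, and let $z$ be any database with $d(x,z) \le 1$. Then
$$|u_{agg}(x,r) - u_{agg}(z,r)| = \left| \sum_{i=1}^{m} w_i\,\big(u_i(x,r) - u_i(z,r)\big) \right| \le \sum_{i=1}^{m} |w_i|\,|u_i(x,r) - u_i(z,r)| \le \sum_{i=1}^{m} |w_i|\, LS^{u_i}(x,0,r).$$
Taking the maximum over all such $z$ on the left-hand side gives $LS^{u_{agg}}(x,0,r) \le \sum_{i=1}^{m} |w_i|\, LS^{u_i}(x,0,r)$. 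Applying condition (1) of admissibility to each $\delta^{u_i}$, i.e. $LS^{u_i}(x,0,r) \le \delta^{u_i}(x,0,r)$, yields $LS^{u_{agg}}(x,0,r) \le \sum_{i=1}^{m} |w_i|\, \delta^{u_i}(x,0,r) = \delta^{u_{agg}}(x,0,r)$, as required.

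For condition (2) I would argue termwise. Fix $x,y$ with $d(x,y) \le 1$ and $t \ge 0$. Since $|w_i| \ge 0$ and each $\delta^{u_i}$ satisfies $\delta^{u_i}(x,t+1,r) \ge \delta^{u_i}(y,t,r)$, we get $|w_i|\,\delta^{u_i}(x,t+1,r) \ge |w_i|\,\delta^{u_i}(y,t,r)$ for every $i \in [m]$; summing over $i$ gives $\delta^{u_{agg}}(x,t+1,r) \ge \delta^{u_{agg}}(y,t,r)$.

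I do not expect a genuine obstacle here; the only step requiring a little care is passing from $\max_{z}\sum_{i} |w_i|\,|u_i(x,r)-u_i(z,r)|$ to $\sum_{i} |w_i|\max_{z} |u_i(x,r)-u_i(z,r)|$, which holds because the maximizer $z$ may differ across coordinates, so the maximum of a sum is at most the sum of the maxima. Everything else is the triangle inequality and unwinding definitions. As a remark, I would also note that Theorem \ref{teo:global_sens_agg} is recovered as the special case $\delta^{u_i}(x,t,r) = \Delta u_i$, since a constant function equal to $\Delta u_i$ is an admissible sensitivity function for $u_i$.
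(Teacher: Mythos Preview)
Your proposal is correct and follows essentially the same approach as the paper: verify the two admissibility conditions directly, using the triangle inequality (plus the bound $\max_z \sum_i \le \sum_i \max_z$) for condition~(1) and a termwise application of the admissibility of each $\delta^{u_i}$ together with $|w_i|\ge 0$ for condition~(2). Your handling of the absolute values $|w_i|$ in condition~(1) is in fact slightly cleaner than the paper's write-up, and your closing remark about recovering Theorem~\ref{teo:global_sens_agg} as the constant-$\Delta u_i$ special case is a nice observation not explicitly made there.
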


\begin{proof}

    First, we need to show that $\delta^{u_{agg}}(x,0,r) \geq LS^{u_{agg}}(x,0,r) $, $\forall x \in \domain, r \in \range$. In what follows, we have that:

    \begin{align*}
        \delta^{u_{agg}}(x,0,r) & = \sum_{i=1}^{m} |w_i|.\delta^{u_i}(x,0,r) \geq \sum_{i=1}^{m} |w_i|.LS^{u_i}(x,0,r) \\
        & = \sum_{i=1}^{m} |w_i|.\max_{y \in \domain| d(x,y) \leq 1}|u_i(x,r)-u_i(y,r)| \\
        & \geq  \max_{y \in \domain | d(x,y) \leq 1} |\sum_{i=1}^{m}|w_i|.u_i(x,r)-\sum_{i=1}^{m}|w_i|.u_i(y,r)| \\
        & = LS^{u_{agg}}(x,0,r).
    \end{align*}    

    The first inequality is true as $\delta^{u_i}$ is admissible and the second inequality is due to the triangle inequality. Now, we show that $\delta^{u_{agg}}(x,t+1,r) \geq \delta^{u_{agg}}(y,t,r)$, $\forall x,y \in \domain, r \in \range$ such that $d(x,y) \leq 1$ and $t \geq 0$. Thus, we have that:

    \begin{align*}
        \delta^{u_{agg}}(x,t+1,r) & = \sum_{i=1}^{m} |w_i|.\delta^{u_i}(x,t+1,r) \\
        & \geq \sum_{i=1}^{m} |w_i|.\delta^{u_i}(y,t,r)  = \delta^{u_{agg}}(y,t,r) 
    \end{align*}    
    as $\delta^{u_i}$ is admissible. Therefore, $\delta^{u_{agg}}$ is an admissible sensitivity function of $u_{agg}$.
\end{proof}


\section{Application 1: Cost Sensitive Decision Trees}
\label{section:decision_trees}

We present a differentially private algorithm for building cost-sensitive decision trees. Decision trees are a popular data mining algorithm for classification and regression tasks \cite{kotsiantis2007supervised}. They are easy to interpret and can handle both numerical and categorical data. 

A decision tree is a tree-like structure where each internal node represents a decision based on a feature, each branch represents the outcome of the decision, and each leaf node represents the class label or the value of the target variable. Some algorithms have been proposed to automatically build decision trees using heuristics as the ID3 algorithm \cite{quinlan1986induction}, CART \cite{breiman1984classification}, and C4.5 \cite{quinlan2014c4}. 

In many real problems, the cost of misclassification is different depending on the error type, i.e., false positive or false negative. For example, in medical diagnosis, the cost of a false negative (missing a disease) can be much higher than the cost of a false positive (diagnosing a healthy person as sick). In those cases, we would like to build decision trees that minimize more than one misclassification error. Note that misclassification errors may be conflicting, in the sense that minimizing one error metric may increase another.

\subsection{Problem Statement}

Given a dataset $x$ with $n$ samples and a class label $y_i$ for each sample $x_i$. The task is to build a differentially private cost-sensitive decision tree $\tree$ that minimizes multiple error metrics simultaneously denoted as utility functions $u_1(x, \tree), u_2(x, \tree), \cdots, u_m(x, \tree)$.

In this work, we consider two error metrics as utility functions: True Positive Rate (TPR) and True Negative Rate (TNR). TPR is the ratio of correctly classified positive samples to the total number of positive samples, and TNR is the ratio of correctly classified negative samples to the total number of negative samples:

\begin{align*}
    TPR(x, \tree)=\frac{TP(x, \tree)}{P(x)}, & & TNR(x, \tree)=\frac{TN(x, \tree)}{N(x)},
\end{align*}

\noindent where $TP(x, \tree)$ is the number of true positive predictions of the tree $\tree$ on the dataset $x$, $P(x)$ is the number of positive samples in the dataset $x$, $TN(x, \tree)$ is the number of true negative predictions of the tree $\tree$ on the dataset $x$, and $N(x)$ is the number of negative samples in the dataset $x$.

\subsection{Private Mechanisms}

To tackle this problem, we employ an evolutionary approach to search for the best tree that minimizes the error metrics using a Pareto and an aggregation approach. We first introduce the base algorithm DP-MOET (Differentially Private Multi-Objective Evolutionary Trees). Then, we define the Pareto and aggregation approaches.

\subsubsection{DP-MOET Algorithm}

The DP-MOET algorithm mechanism is an evolutionary algorithm that builds a differentially private cost-sensitive decision trees. This algorithm is inspired by \cite{zhang2013privgene} and \cite{zhao2007multi}. The algorithm is composed of three main steps: initialization, selection, crossover, and mutation. The mechanism is described in Algorithm \ref{algo:dp_moet}.

\begin{algorithm}[htp]    
    \SetKwFunction{algo}{DP-MOET(Database $x$, Privacy Budget $\epsilon$, utility functions $u_1, \cdots, u_m$, Population size $p$, Selection size $s$, Number of iterations $k$, Tree initial depth $d$, Tree Max Depth $d_{max}$, Output Size $o$)}
    \SetKwProg{myalg}{Function}{}{}
    \myalg{\algo}{
	$\TT = random\_tree\_initialization(p, d)$ \\
    \For{$i = 1$ to $k$ }{
    	$\TT = DP\_MOSelection(x,\; \epsilon,\; s,\; u_1, \cdots, u_m,\; \TT )$ \\
        \For{$j = 1$ to $p/2$ }{
            $\tree_1, \tree_2 =$ random selection of two trees from $\TT$  \\
            $\tree_1', \tree_2' = Crossover(\tree_1, \tree_2)$  \\
            $\tree_1' = Mutation(t_1', d_{max})$ \\
            $\tree_2' = Mutation(t_2', d_{max})$ \\
            $\tree_1' = Prune(\tree_1', \; d_{max})$ \\
            $\tree_2' = Prune(\tree_2', \; d_{max})$ \\
            $\TT = \TT \cup \{\tree_1', \tree_2'\}$ \\
        }              
    }     
    \KwRet{$DP\_MOSelection(x,\; \epsilon,\; o,\; u_1, \cdots, u_m,\; \TT )$}\;
	}{}
    \addtocontents{loa}{\protect\addvspace{-9pt}}
    \caption{DP-MOET Algorithm}
    \label{algo:dp_moet}

\end{algorithm}

The algorithm starts by initializing a population of trees $\TT$ with $p$ random trees with depth $d$. Then, it iterates for $k$ generations (Line 2). In each generation, it selects the best $s$ trees from the population $\TT$ using the DP-MOSelection mechanism and stores in $\TT$ itself (Line 4). The DP-MOSelection calls a multi objective selection algorithm (PrivPareto or PrivAgg) $s$ times to select the best $s$ trees from the population $\TT$ on the dataset $x$ with privacy budget $\epsilon$ according to the utility functions $u_1, \cdots, u_m$. Then, it randomly chooses two trees from the population $\TT$ (Line 6) and applies crossover and mutation to generate two new trees (Lines 7-9). The crossover operation may create excessively large trees which can lead to overfitting and, also, increases the search space. Thus, the algorithm prunes the trees to a maximum depth $d_{max}$ (Lines 10 and 11) and adds the new trees to the population $\TT$ (Line 12). Note that the size of the population $\TT$ is $s+p$ after the all crossover and mutation operations in an iteration. Finally, The algorithm returns the best $o$ trees from the population $\TT$ using the DP-MOSelection mechanism (Line 11). These operations are described with more details in the subsection \ref{section:tree_representation}. 

In what follows, we obtain four algorithms from DP-MOET: i) DP-MOET-Pareto\textsubscript{local} by replacing the DP-MOSelection mechanism by PrivPareto\textsubscript{local}; ii) DP-MOET-Pareto\textsubscript{global} by replacing the DP-MOSelection mechanism by PrivPareto\textsubscript{global}; iii) DP-MOET-Agg\textsubscript{local} by replacing the DP-MOSelection mechanism by PrivAgg\textsubscript{local}; and iv) DP-MOET-Agg\textsubscript{global} by replacing the DP-MOSelection mechanism by PrivAgg\textsubscript{global}. The sensitivity analysis for these four mechanisms is provided in the subsection \ref{section:sensitivity_analysis_tree}.

\subsubsection{Tree Representation and Operations}
\label{section:tree_representation}

Now, we provide more details about the tree representation and initialization, crossover, mutation and pruning operations. A decision tree is composed of nodes, where each node is a decision based on an attribute. A node can be a decision node or a terminal node. A terminal node has a class label for prediction. A decision node has an attribute and a threshold to split the data if the attribute is numerical or a value if the attribute is categorical. All numerical attribute are normalized to the range $[0, 1]$, which means that the threshold is also in the range $[0, 1]$. A prediction is made by traversing the tree from the root to the leaf node based on the attribute values of the sample.

The initialization function $random\_tree\_initialization(p, d)$ calls the function $random\_tree(d)$ $p$ times to generate a set of $p$ random trees with depth $d$. The $random\_tree(d)$ function creates a full decision tree where all the leaf nodes are at the same depth $d$ and the decision nodes are randomly selected. The decision nodes are selected by randomly choosing an attribute and a threshold for numerical attributes or a value for categorical attributes. The class label of the leaf nodes is randomly selected from the set of class labels in the dataset.

The crossover operation $Crossover(\tree_1, \tree_2)$ takes two trees $t_1$ and $t_2$ and generates two new trees $\tree_1'$ and $\tree_2'$ by swapping randomly selected nodes, including their subtrees, from the trees $\tree_1$ and $\tree_2$. The mutation operation $Mutation(\tree)$ takes a tree $t$, randomly selects a node and swap to a new randomly create tree using the $random\_tree(d)$ function with $d=d_{max} - d' + 1$ where $d'$ is the depth of the selected node. This prevents the mutation operation to create a tree with a depth greater than $d_{max}$.

The pruning operation $Prune(\tree, d_{max})$ takes a tree $\tree$ and prunes the tree to a maximum depth $d_{max}$ by removing the nodes that are deeper than $d_{max}$ and replacing them by a leaf node with a random class label.

\subsubsection{Sensitivity Analysis}
\label{section:sensitivity_analysis_tree}

\paragraph{DP-MOET-Pareto\textsubscript{local}} The computation of $\delta^{PS}(x,t,r)$ (Definition \ref{def:delta_ps}) requires an admissible sensitivity functions $\delta^{TPR}$ and $\delta^{TNR}$ for the two utility functions $TPR$ and $TNR$, respectively. We define $\delta^{TPR}$ as

\begin{equation*}
    \delta^{TPR}(x,t,\tree) =
    \begin{cases}
        \frac{P'(x, t)-TP'(x, t, \tree)}{P'(x, t)(P'(x, t)-1)} & \text{if } t \leq P(x) -TP(x,\tree) \\
        \frac{TP'(x, t, \tree)}{P'(x, t)(P'(x, t)-1)} & \text{otherwise}
    \end{cases},
\end{equation*}
where 
\begin{align*}
    TP'(x,t,\tree) & = \max\{TP(x,\tree)- \max\{t-(P(x)-TP(x,\tree)), 0\}, 2\}, \\
    P'(x, t) & = \max\{P(x)-t, 2\},
\end{align*}

$P(x)$ is the number of positive samples in the database $x$ and $TP(x,\tree)$ is the number of true positive predictions of the tree $\tree$ on the database $x$. The sensitivity function for $\delta^{TNR}$ is defined similarly by replacing $TP$ by $TN$ and $P$ by $N$ where $TN(x,\tree)$ is the number of true negative predictions of the tree $\tree$ on the database $x$ and $N(x)$ is the number of negative samples in the database $x$. We show the following Lemma:

\begin{restatable}{lemma}{theoremadmissibledeltatprtnr}
    (Admissible Sensitivity Functions). The functions $\delta^{TPR}$ and $\delta^{TNR}$ are admissible sensitivity functions of the utility functions $TPR$ and $TNR$, respectively.
    \label{lemma:admissible_sensitivity_tpr_tnr}
\end{restatable}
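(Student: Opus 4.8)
Admissibility (Definition~\ref{def:admissible_function}) of $\delta^{TPR}$ amounts to two requirements: the \emph{base inequality} $\delta^{TPR}(x,0,\tree)\ge LS^{TPR}(x,0,\tree)$ for every database $x$ and tree $\tree$, and the \emph{monotone-growth inequality} $\delta^{TPR}(x,t+1,\tree)\ge\delta^{TPR}(y,t,\tree)$ whenever $d(x,y)\le 1$ and $t\ge 0$. The plan is to establish both for $\delta^{TPR}$ and then to observe that $\delta^{TNR}$ is obtained by the literal substitution $P\mapsto N$, $TP\mapsto TN$, $TPR\mapsto TNR$, so that the same argument settles the $TNR$ half of the lemma. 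Throughout I abbreviate $P=P(x)$, $TP=TP(x,\tree)$, $FN=P-TP$. The floors $\max\{\cdot,2\}$ in $P'$ and $TP'$ are inert once $P,TP\ge 3$, so I would treat that generic regime in detail and dispatch the finitely many small-count boundary cases by direct inspection.

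For the base inequality I would enumerate how replacing a single record perturbs the pair $(P,TP)$. A replaced record may keep or flip its class label, and, since its feature vector may change, the fixed tree $\tree$ may keep or flip its prediction on it; this gives a short list of transitions in which $|\Delta P|\le 1$, $|\Delta TP|\le 1$, and the two are linked --- a positive record staying positive forces $\Delta P=0$, $\Delta TP\in\{-1,0,1\}$; a positive turning negative forces $\Delta P=-1$, $\Delta TP\in\{-1,0\}$; a negative contributes nothing to $TP$ until it turns positive, forcing $\Delta P=1$, $\Delta TP\in\{0,1\}$. Substituting each transition into $TPR=TP/P$ and simplifying yields a small set of candidate magnitudes, each a ratio with denominator $P(P-1)$, $P(P+1)$ or $P$; taking the largest reachable one and comparing it against $\delta^{TPR}(x,0,\tree)$ --- which is the first branch of the definition, since $0\le FN$ always holds --- gives the base inequality. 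This step is mechanical once the transition table is written out.

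The monotone-growth inequality is where the real work lies. I would first argue semantically, reading off the structure of the definition: $\delta^{TPR}(x,t,\tree)$ is meant to be (an upper bound on) the local sensitivity of $TPR$ at the worst database reachable from $x$ in $t$ steps, and the two branches together with the $\max\{t-FN,0\}$ term encode that this worst database is obtained by first deleting false negatives (shrinking $P$ at fixed $TP$) for $FN$ steps and only then deleting true positives. Since any database within distance $t$ of a neighbor $y$ lies within distance $t+1$ of $x$, the worst reachable from $x$ at budget $t+1$ already dominates the worst reachable from $y$ at budget $t$. To verify this on the closed form I would check two component inequalities: $P(y)\ge P(x)-1$ gives $P'(y,t)=\max\{P(y)-t,2\}\ge\max\{P(x)-(t+1),2\}=P'(x,t+1)$, and, since a single-record change moves $FN$ and $TP$ by at most one, the count feeding the numerator ($P'-TP'$ in branch~1, $TP'$ in branch~2) satisfies the matching inequality; because the ratio $\frac{P'-TP'}{P'(P'-1)}$ (resp.\ $\frac{TP'}{P'(P'-1)}$) is decreasing in $P'$ and increasing in that count, the two bounds combine to $\delta^{TPR}(x,t+1,\tree)\ge\delta^{TPR}(y,t,\tree)$. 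The main obstacle is that the branch test $t\le P-TP$ may evaluate differently for $(x,t+1)$ than for $(y,t)$ --- it can change by at most one step since $|FN(x)-FN(y)|\le 1$ --- so the argument must cross-check the two branches against one another and, in particular, confirm the inequality at the boundary $t=P(x)-TP(x,\tree)$ where the branches coincide up to the floors. I would organize the verification as a case split on the type of the modified record (positive or negative, prediction correct or incorrect, in $x$ and in $y$) crossed with the position of $t$ relative to $FN(x)$ and $FN(y)$; the $TNR$ statement then follows verbatim by the symmetry noted at the outset.
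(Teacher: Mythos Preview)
Your plan for the base inequality is essentially the paper's: both arguments enumerate the finitely many ways a single-record perturbation moves the pair $(P,TP)$, simplify each resulting $|\Delta TPR|$ to a closed form with denominator $P(P\pm1)$, and take the maximum. The only cosmetic difference is the neighbor model: the paper's proof enumerates \emph{add/remove} transitions (``a true positive is added'', ``a false negative is removed'', \dots, eight cases that collapse to $\max\{\frac{P-TP}{P(P-1)},\frac{TP}{P(P-1)}\}$), whereas you enumerate \emph{replacement} transitions in line with the Hamming distance of Definition~\ref{def:dp}. Your transition table will produce the same extreme ratios, so nothing substantive changes.

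Where your proposal diverges is the monotone-growth inequality. The paper's appendix proof in fact \emph{stops} after the base case: it writes ``First, we show $\delta^{TPR}(x,0,\tree)\ge LS^{TPR}(x,0,\tree)$'', carries out the enumeration, and ends there without ever treating $\delta^{TPR}(x,t+1,\tree)\ge\delta^{TPR}(y,t,\tree)$. Your semantic reading of $P'(x,t)$ and $TP'(x,t,\tree)$ as worst-case counts reachable in $t$ steps, together with the component bounds $P'(y,t)\ge P'(x,t+1)$ and the monotonicity of each branch formula, is a sound strategy and is genuinely additional content relative to the paper. Two cautions as you flesh it out: first, the numerator of branch~1 is $P'-TP'=FN-t$, which can \emph{decrease} when you pass from $(y,t)$ to $(x,t+1)$ (e.g.\ if $FN(y)=FN(x)+1$), so the inequality does not follow from separate numerator/denominator monotonicity alone and you will need the full algebraic comparison; second, at the branch boundary $t=FN$ the first branch evaluates to zero while the local sensitivity at the corresponding all-true-positive database is positive, so the interaction between the branch test, the $\max\{\cdot,2\}$ floors, and the cross-branch comparison really does need the explicit case check you allude to and cannot be dismissed as a routine boundary inspection.
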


The proof of Lemma \ref{lemma:admissible_sensitivity_tpr_tnr} is deferred to \iftoggle{vldb}{our technical report \cite{ourtechnicalreport}}{the appendix}.  Thus, by Theorem \ref{theo:admissible_delta_ps}, the admissible function for the Pareto score $PS$ using $TPR$ and $TNR$, $\delta^{PS}(x,t,r)$, is admissible since $\delta^{TPR}$ and $\delta^{TNR}$ are admissible sensitivity functions of the utility functions $TPR$ and $TNR$, respectively.

\paragraph{DP-MOET-Pareto\textsubscript{global}} As described in Section \ref{sec:sensitivity_pareto_score}, the global sensitivity of the Pareto score used in PrivPareto\textsubscript{global} is $\TT-1$ where $\TT$ is the number of trees in the population.

\paragraph{DP-MOET-Agg\textsubscript{local}} According to Theorem \ref{teo:admissible_agg_sens}, a admissible sensitivity function of the aggregate utility function $u_{agg}$ used in PrivAgg\textsubscript{local} is given by $\delta^{u_{agg}}(x,t,r) = w_{tpr} \cdot \delta^{TPR}(x,t,r) + w_{tnr} \cdot \delta^{TNR}(x,t,r)$ where $w_{tpr}$ and $w_{tnr}$ are the weights of the utility functions $TPR$ and $TNR$, respectively.

\paragraph{DP-MOET-Agg\textsubscript{global}} For the PrivAgg mechanism in DP-MOET-Agg\textsubscript{global} algorithm, we use an upper bound of the global sensitivity given by Theorem \ref{teo:global_sens_agg}, i.e., we use $w_{tpr} \cdot \Delta TPR + w_{tnr} \cdot \Delta TNR$.

In what follows, we need to compute $\Delta TPR$ and $\Delta TNR$. We have that $\Delta TPR = 1$ since the range of the TPR is $[0, 1]$ and we show an example where the maximum and minimum values are obtained in neighboring databases: 

\begin{example}
    let $x$ be a database containing the one single attribute element $x_1=0.7$  associated with output $y_1=1$, let $x'$ be a neighboring database with one element $x'_1=0.3$ associated with output $y'_1=1$ and $\tree$ as a tree composed by a root decision node with threshold $0.5$ and two children leaf nodes with class labels $1$, if value is larger or equal than $0.5$, and $0$ otherwise. Thus $P(x)=P(x')=1$, $TP(x, \tree)=1$ and $TP(x', \tree)=0$ which implies that $TPR(x, \tree)=1$ and $TPR(x', \tree)=0$. 
    \label{example:global_sensitivity_tpr_tnr}
\end{example}

Therefore, $\Delta TPR = 1$. Similarly, we have that $\Delta TNR = 1$. Thus, we use $w_{tpr} + w_{tnr}$ as global sensitivity.

\subsection{Experimental Evaluation - Pareto Approaches}
\label{sec:exp_pareto_decision_tree}

\paragraph{Datasets} i) \textit{National Long Term Care Survey (NLTCS)} \cite{manton2010national} dataset containing $16$ binary attributes with $21,574$ individuals that participated in the survey; ii) \textit{Adult} dataset \cite{blake1998uci} consisting of $45,222$ records with $4$ discrete attributes and $8$ are discrete attributes and; iii) American Community Survey Public Use Microdata Sample dataset (ACS) \cite{series2015version} that gathers general information of $47,461$ people with $23$ binary attributes obtained from 2013 and 2014
    
\paragraph{Methods} We conduct experiments with the two proposed methdods: i) \textit{DP-MOET-Pareto\textsubscript{local}} and ii) \textit{DP-MOET-Pareto\textsubscript{global}}.

\paragraph{Setup} We set the following parameters for experimentation: privacy budget $\epsilon = \{ 0.01, 0.05, 0.1, 0.5, 1.0, 2.0, 5.0 \}$, population size $p=30$, selection size $s=3$, number of iterations $k=3$, tree initial depth $d=4$, tree max depth $d_{max}=10$ and output size $o=3$. As the error metric, we use the metric $C$ used in \cite{zitzler2000comparison}:

\begin{definition}
    Let $X'$, $X''$ be two set of candidates. The function $C$ maps the ordered pair $(X', X'')$ to the interval $[0,1]$:
    
    $$ C(X', X'') = \frac{|\{  a'' \in X'' \; | \; \exists a' \in X' \; s.t. \; a' \succeq a''  \}|}{|X''|} $$
\end{definition}

The metric $C$ measures the percentage of candidates in $X''$ that are dominated by at least one candidate in $X'$. Thus, when $C(X', X'') = 1$, it means that all candidates in $X''$ are dominated by at least one candidate in $X'$ and when $C(X', X'') = 0$, it means that no candidate in $X''$ is dominated by any candidate in $X'$. Therefore, we carry out experiments of the two algorithms DP-MOET-Pareto\textsubscript{local} and DP-MOET-Pareto\textsubscript{global} against a non private version of DP-MOET-Pareto, the \textit{NODP-MOET-Pareto}. NODP-MOET-Pareto is the non-private version of the algorithm obtained replacing the DP-MOSelection in the DP-MOET algorithm (Algorithm \ref{algo:dp_moet}) by the maximum function over the Pareto score.

For each method $M \in \{$ DP-MOET-Pareto\textsubscript{local}, DP-MOET-Pareto\textsubscript{global} $\}$, we run $500$ experiments where, in each experiment, we run and obtain a candidate set $X''$ from $M$, we run and obtain a candidate set $X'$ from NODP-MOET-Pareto and compute $C(X', X'')$. We report the mean value of $C(X', X'')$ over the $500$ runs for each method, dataset and privacy budget. 

\paragraph{Results} Table \ref{table:private_tree_pareto} exhibits the results. As $\epsilon$ grows, the error gets to a plateau for each method dataset. For the NLTCS datset, the error converges to $0.10$, for the Adult dataset, the error converges to $0.08$ and for the ACS dataset, the error converges to $0.10$. DP-MOET-Pareto\textsubscript{local} is shown to be accurate as it gets to the near-minimum error for small privacy budget values ($\epsilon \leq 0.5$). For the DP-MOET-Pareto\textsubscript{global} method, we also tested with very large values of $\epsilon$ to check with what privacy budget the error converges to the minimum. The DP-MOET-Pareto\textsubscript{global} reached the plateau for all datasets with $\epsilon > 10^5$.

\begin{table}[ht]
    \caption{Mean error metric $C$ computed over $500$ runs for DP-MOET-Pareto\textsubscript{global} (MO\textsubscript{g}) and DP-MOET-Pareto\textsubscript{local} (MO\textsubscript{l}) against NODP-MOET-Pareto.}
    \centering
    \begin{tabular}{c|c|c|c|c|c|c|}
        \cline{2-7}
        & \multicolumn{2}{c|}{NLTCS} &  \multicolumn{2}{c|}{Adult} & \multicolumn{2}{c|}{ACS} \\ \hline
        \multicolumn{1}{|c|}{$\epsilon$} & MO\textsubscript{g} & MO\textsubscript{l} & MO\textsubscript{g} & MO\textsubscript{l} & MO\textsubscript{g} & MO\textsubscript{l} \\ \hline
        \multicolumn{1}{|c|}{$0.01$} & 0.48 & 0.45 & 0.21 & 0.18 & 0.28 & 0.25 \\ \hline
        \multicolumn{1}{|c|}{$0.05$} & 0.48 & 0.35 & 0.20 & 0.14 & 0.28 & 0.19 \\ \hline
        \multicolumn{1}{|c|}{$0.1$}  & 0.48 & 0.25 & 0.22 & 0.10 & 0.28 & 0.15 \\ \hline
        \multicolumn{1}{|c|}{$0.5$}  & 0.48 & 0.13 & 0.19 & 0.07 & 0.28 & 0.12 \\ \hline
        \multicolumn{1}{|c|}{$1.0$}  & 0.48 & 0.12 & 0.19 & 0.08 & 0.28 & 0.11 \\ \hline
        \multicolumn{1}{|c|}{$2.0$}  & 0.48 & 0.11 & 0.21 & 0.07 & 0.28 & 0.10 \\ \hline
        \multicolumn{1}{|c|}{$5.0$}  & 0.48 & 0.10 & 0.20 & 0.08 & 0.28 & 0.10 \\ \hline
    \end{tabular}    
    \label{table:private_tree_pareto}

\end{table}

\subsection{Experimental Evaluation - Aggregation Approaches}

We use the same datasets of the previous experimentation on Section \ref{sec:exp_pareto_decision_tree}. 

\paragraph{Methods} We conduct experiments with the following methods: i) DP-MOET-Agg\textsubscript{local}; ii) DP-MOET-Agg\textsubscript{global}; iii) NODP-MOET-Agg, which is the non-private version of the algorithm DP-MOET obtained by replacing the DP-MOSelection mechanism by the maximum over the aggregate utility function and; iv) LocalDiffPID3, the state of the art ID3 differentially private algorithm \cite{farias2023local}.


\paragraph{Evaluation} We set the following parameters for experimentation: privacy budget $\epsilon = \{ 0.01, 0.05, 0.1, 0.5, 1.0, 2.0 \}$, population size $p=100$, selection size $s=20$, number of iterations $k=2$, tree initial depth $d=3$, tree max depth $d_{max}=5$ and output size $o=1$. As error metric, we report the aggregate utility function $u_{agg}$ defined as $u_{agg} = w_{tpr} \cdot TPR + w_{tnr} \cdot TNR$ where $w_{tpr} = 3$ and  $w_{tnr} = 2$.

\paragraph{Results} The experimental results (Figure \ref{fig:weighted_tree}) show that the DP-MOET-Agg\textsubscript{local} achieves a higher fitness than the other private methods and, also, it obtain the same fitness as the non-private method NODP-MOET-Agg for a low privacy budget, $\epsilon = 0.5$. Also, it improves on the LocalDiffPID3 by $21\%$ for adult dataset and $6\%$ for NLTCS dataset. The exception is the ACS dataset where the LocalDiffPID3 outperforms the other methods when $\epsilon \geq 0.5$. However, in the ACS dataset, the DP-MOET-Agg\textsubscript{local} performs better than the LocalDiffPID3 and the other private methods for very small values of privacy budget, $\epsilon < 0.5$.

\pgfplotsset{
    cycle list={%
        {color=brown,mark=triangle,solid},
        {color=red, mark=square,solid},
        {color=blue,mark=o,solid},
        {color=black,dashed}        
    }
}

\begin{figure*}[h!]
    \center
    \begin{tikzpicture}
        \begin{customlegend}[legend columns=4,legend style={align=center,draw=none,column sep=2ex},legend entries={
            DP-MOET-Agg\textsubscript{local}, 
            DP-MOET-Agg\textsubscript{global}, 
            LocalDiffPID3, 
            NODP-MOET-Agg          
        }]
        \addlegendimage{color=brown,mark=triangle,solid}
        \addlegendimage{color=red, mark=square,solid}
        \addlegendimage{color=blue,mark=o,solid}
        \addlegendimage{color=black,dashed}
        \end{customlegend}
    \end{tikzpicture}

\begin{subfigure}[b]{0.32\textwidth}
    \centering
    \resizebox{\linewidth}{!}{
        \begin{tikzpicture}
            \begin{axis}[grid=major,                
                xticklabels={0.01, 0.05, 0.1, 0.5, 1.0, 2.0},
                xtick={1,2,3,4,5,6},
                xlabel={$\epsilon$},
                ylabel={Fitness},                
                xticklabel style={
                    /pgf/number format/fixed,
                    /pgf/number format/precision=5,
                }]

            \addplot coordinates {
                (1, 2.904563508099393)
                (2, 3.708977172901215)
                (3, 3.954701488356122)
                (4, 4.172720607549183)
                (5, 4.187914268881361)
                (6, 4.192291636890738)};  

            \addplot coordinates {
                (1, 2.492106148851644)
                (2, 2.5462128138148423)
                (3, 2.522649260320534)
                (4, 2.555821743738917)
                (5, 2.550482838963625)
                (6, 2.526499059752713)};

            \addplot coordinates {
                (1, 2.956726876137071)
                (2, 3.370477110370742)
                (3, 3.5709581247041666)
                (4, 3.957631625418502)
                (5, 3.9489835914968046)
                (6, 4.014970212347597)};

            \addplot coordinates {
                (1, 4.199937)
                (2, 4.199937)
                (3, 4.199937)
                (4, 4.199937)
                (5, 4.199937)
                (6, 4.199937)};                    

            \end{axis}
        \end{tikzpicture}            
    }
    \caption{NLTCS dataset}
    \label{fig:wtree_nltcs}
\end{subfigure}
\begin{subfigure}[b]{0.32\textwidth}
    \centering
    \resizebox{\linewidth}{!}{
        \begin{tikzpicture}
            \begin{axis}[grid=major,                
                xticklabels={0.01, 0.05, 0.1, 0.5, 1.0, 2.0},
                xtick={1,2,3,4,5,6},
                xlabel={$\epsilon$},
                ylabel={Fitness},                
                xticklabel style={
                    /pgf/number format/fixed,
                    /pgf/number format/precision=5,
                }]
                \addplot coordinates {
                    (1, 2.9281605347302686)
                    (2, 3.5642185743471586)
                    (3, 3.77401986495592)
                    (4, 3.854696024177308)
                    (5, 3.859740405098789)
                    (6, 3.86268201932102)};  
    
                \addplot coordinates {
                    (1, 2.4884417411861173)
                    (2, 2.5142241852740383)
                    (3, 2.510138247819028)
                    (4, 2.497261586299972)
                    (5, 2.557552455153265)
                    (6, 2.536525923727093)};    
    
                \addplot coordinates {
                    (1, 2.3858526483433895)
                    (2, 2.5791783550024983)
                    (3, 3.071014136204832)
                    (4, 3.2892856887697084)
                    (5, 3.387169698853894)
                    (6, 3.398219358663077)};
    
                \addplot coordinates {
                    (1, 3.864989)
                    (2, 3.864989)
                    (3, 3.864989)
                    (4, 3.864989)
                    (5, 3.864989)
                    (6, 3.864989)};   
            \end{axis}
        \end{tikzpicture}            
    }
    \caption{Adult dataset}
    \label{fig:wtree_adult}
\end{subfigure}
\begin{subfigure}[b]{0.32\textwidth}
    \centering
    \resizebox{\linewidth}{!}{
        \begin{tikzpicture}
            \begin{axis}[grid=major,                
                xticklabels={0.01, 0.05, 0.1, 0.5, 1.0, 2.0},
                xtick={1,2,3,4,5,6},
                xlabel={$\epsilon$},
                ylabel={Fitness},                
                xticklabel style={
                    /pgf/number format/fixed,
                    /pgf/number format/precision=5,
                }]
                \addplot coordinates {
                    (1, 3.0678902276444986)
                    (2, 3.868234361008169)
                    (3, 4.045547273877922)
                    (4, 4.210778659478621)
                    (5, 4.226580302361582)
                    (6, 4.2130795049459575)};  
    
                \addplot coordinates {
                    (1, 2.4873258363523933)
                    (2, 2.501130762847372)
                    (3, 2.4891745405813746)
                    (4, 2.5176522780702086)
                    (5, 2.51833286298526)
                    (6, 2.492720749097611)};

                \addplot coordinates {
                    (1, 2.6549005736834044)
                    (2, 3.1924074397522775)
                    (3, 3.635960444864086)
                    (4, 4.6951726963668285)
                    (5, 4.694993224982661)
                    (6, 4.698549494242998)};
    
                \addplot coordinates {
                    (1, 4.239916)
                    (2, 4.239916)
                    (3, 4.239916)
                    (4, 4.239916)
                    (5, 4.239916)
                    (6, 4.239916)};
            \end{axis}
        \end{tikzpicture}            
    }
    \caption{ACS dataset}
    \label{fig:wtree_acs}
\end{subfigure}
\caption{Mean fitness/utility score $u_{agg} = w_{tpr} \cdot TPR + w_{tnr} \cdot TNR$ for $w_{tpr} = 3$, $w_{tnr} = 2$ and $\epsilon \in \{0.01, 0.05, 0.1, 0.5, 1.0, 2.0\}$.}
\label{fig:weighted_tree}
\end{figure*}
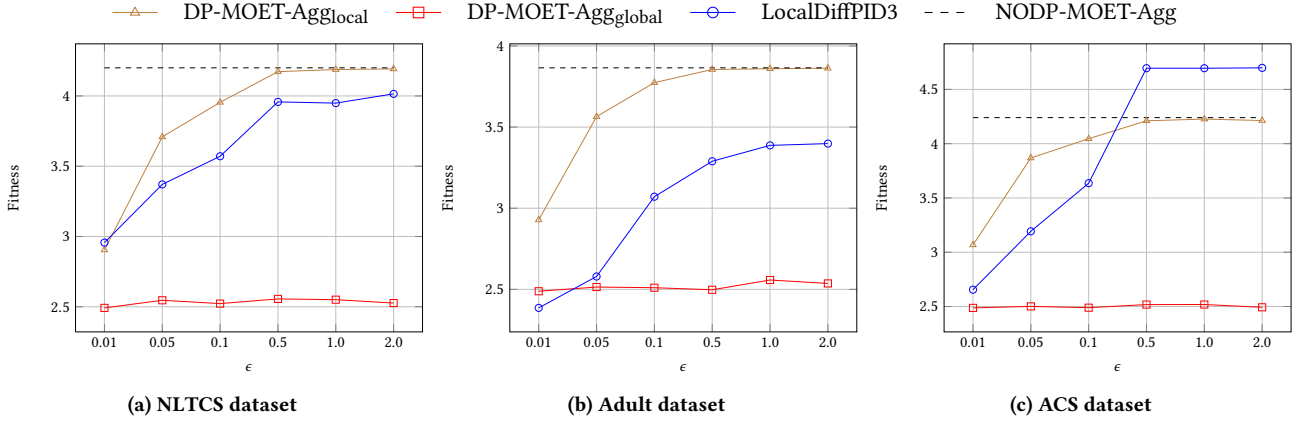

\section{Application 2: Multi Objective Top-k Influential Node Selection}
\label{section:influential}


In social network analysis, detecting influential nodes is a fundamental task that helps determine individuals or entities with the greatest potential to drive information flow, connect communities, or influence interactions within the network \cite{peng2018influence}. However, such analyses often use sensitive data, raising privacy concerns \cite{jiang2021applications}. Differentially private methods address this by adding noise to protect individual privacy while identifying influential nodes \cite{brito2024differentially}.

\subsection{Problem statement}

Given a graph $G=(V,E)$, the multi objective top-k most influential node selection is the problem of selecting $k$ nodes from $V$ that maximizes multiple influence/centrality metrics simultaneously using \textit{edge differential privacy}. The edge differential privacy model is defined as follows:

\begin{definition}[Edge Differential Privacy]
    A randomized algorithm $\mathcal{M}$ satisfies $\epsilon$-edge differential privacy if for any two neighboring graphs $G$ and $G'$ that differ by exactly one edge, and for any possible output $O$ of $\mathcal{M}$, the following holds:
    \[
    \Pr\left[\mathcal{A}(G) \in S\right] \leq e^{\epsilon} \cdot \Pr\left[\mathcal{A}(G') \in S\right],
    \]
    where $\epsilon > 0$ is the privacy budget.
\end{definition}

In this work, as an example, we use two metrics/utility functions: degree centrality and egocentric density \cite{marsden1993reliability}. The degree centrality of a node $v$ is the number of edges incident to the node $degree(G, v)=|N^G(v)|$ where $N^G(v)$ is the set of neighbors of node $v$ in $G$. The egocentric density of a node is the ratio of the number of edges between the neighbors of the node and the maximum number of edges that could exist between them. Formally, the egocentric density of a node $v$ is defined as:

\begin{equation}
    \text{egodensity}(G,v) = \frac{2.|\{(u,w) \in E(G): u,w \in N^G(v)\}|}{|N^G(v)| \times (|N^G(v)|-1)}.
\end{equation}

\subsection{Private Mechanism}

We provide two main strategies to solve this problem: a Pareto based approach and a aggregation based approach. We first provide a base private algorithm that computes a set of $k$ nodes of $V$ that maximizes the two utility functions. Then our proposed algorithms are based on this base algorithm.

Our base algorithm computes a set of $k$ nodes of $V$ that maximizes the two utility functions by calling a differentially private multi objective selection algorithm $k$ times to get one node at each call without replacement. 

From the base algorithm we obtain a family of multi-objective algorithms called \textit{DP-MOTkIN} (Differentially Private Multi-Objective Top-k Influential Node selection). We propose two Pareto based approaches: i) \textit{DP-MOTkIN-Pareto\textsubscript{local}} which is the based algorithm with PrivPareto\textsubscript{local} as the multi objective selection algorithm; and ii) \textit{DP-MOTkIN-Pareto\textsubscript{global}} using PrivPareto\textsubscript{global} likewise. 

Also, we obtain two aggregation approaches based on the base algorithms: i) \textit{DP-MOTkIN-Agg\textsubscript{local}} using PrivAgg\textsubscript{local}; and ii) \textit{DP-MOTkIN-Agg\textsubscript{global}} using PrivAgg\textsubscript{global}.

\subsection{Sensitivity Analysis}

Similar to the analysis done in Section \ref{section:sensitivity_analysis_tree}, we need to compute the global and local sensitivity for degree centrality and egocentric density for the PrivPareto and PrivAgg algorithms in the DP-MOTkIN algorithms.

\paragraph{Degree} The global sensitivity of the degree is $\Delta degree = 1$ and the element local sensitivity $LS^{degree}(G,t,v)=1$, because for any neighboring database that differs by one edge, say $e = (u,v)$, the degrees of $u$ and $v$ each change by exactly one.

\paragraph{Egocentric density} The global sensitivity of the egocentric density is $\Delta degree = 1$, which is set by the following example: let $G=(V,E)$ be a graph with $V=\{v_1,v_2,v_3\}$ and $E=\{(v_1,v_2),(v_1,v_3)\}$. Let $G'=(V',E')$ be a graph with $V'=\{v_1,v_2,v_3\}$ and $E'=\{(v_1,v_2),(v_2,v_3), (v_1, v_3)\}$. In this example, $egodensity(G, v)=0$ and $egodensity(G', v)=1$, which are the minimum and the maximum of the range of $egodensity$, respectively. An admissible sensitivity function for egocentric density is given as:

\begin{equation*}
    \delta^{egodensity}(G,t,v) =
    \begin{cases}
        2/(|N^G(v)|-t-2) & \text{if} \ |N^G(v)|-t > 2 \\
        1 & \text{otherwise}
    \end{cases}, 
\end{equation*}

Next, we show that the sensitivity function $\delta^{egodensity}$ is admissible. The proof is deferred to \iftoggle{vldb}{our technical report \cite{ourtechnicalreport}}{the appendix}.

\begin{restatable}{lemma}{theoremadmissibledeltaegodensity}
    The sensitivity function $\delta^{egodensity}(x,t,r)$ is admissible for $egodensity$.
    \label{lemma:egodensity_sensitivity}
\end{restatable}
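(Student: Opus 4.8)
The plan is to check the two requirements of Definition~\ref{def:admissible_function} for $\delta^{egodensity}$: the base bound $\delta^{egodensity}(G,0,v)\ge LS^{egodensity}(G,0,v)$ for every graph $G$ and vertex $v$, and the forward recurrence $\delta^{egodensity}(G,t+1,v)\ge\delta^{egodensity}(G',t,v)$ for every pair of graphs $G,G'$ that differ in a single edge and every $t\ge 0$. Throughout I would write $d=|N^G(v)|$ for the degree of $v$ and $m_v$ for the number of edges of $G$ with both endpoints in $N^G(v)$, so that $\mathrm{egodensity}(G,v)=\frac{2m_v}{d(d-1)}$ (with the natural convention that this ratio is $0$ when $d\le 1$).

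For the base bound I would first bound $LS^{egodensity}(G,0,v)$ by a case analysis on where the modified edge $e$ sits relative to $v$'s ego-network. If $v\notin e$ and $e$ has at most one endpoint in $N^G(v)$, the ego-network of $v$ is untouched and the change is $0$. If $v\notin e$ but both endpoints of $e$ lie in $N^G(v)$, the degree $d$ is unchanged while $m_v$ shifts by exactly $1$, giving a change of $\frac{2}{d(d-1)}$. The remaining, and computationally heaviest, case is $v\in e$, say $e=(v,a)$: adding it raises the degree to $d+1$ and raises $m_v$ by the number $k$ of edges from $a$ into the old neighborhood ($0\le k\le d$), while removing it lowers the degree to $d-1$ and lowers $m_v$ by the number of edges from $a$ into the remaining neighbors. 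In each subcase I would expand $\bigl|\mathrm{egodensity}(G',v)-\mathrm{egodensity}(G,v)\bigr|$, clear denominators, and maximise the resulting expression, which is linear in $m_v$ and $k$, over their feasible ranges; the maxima come out to be $O(1/d)$ and in particular below $\frac{2}{d-2}$ whenever $d>2$. For $d\le 2$ the egodensity lies in $[0,1]$ (or is degenerate and set to $0$), so any change is at most $1$. Taking the maximum over all cases and matching against the two branches of $\delta^{egodensity}(G,0,v)$ establishes the base bound.

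For the recurrence the key structural observation is that a single edge modification changes $v$'s degree by at most one, so if $d'=|N^{G'}(v)|$ then $d'\in\{d-1,d,d+1\}$, in particular $d'\ge d-1$. I would then split on the active branch of the piecewise definition. When $d-(t+1)>2$, the left side is $\frac{2}{d-(t+1)-2}$; since then $d'-t\ge (d-1)-t>2$ as well, the right side is $\frac{2}{d'-t-2}$, and the desired inequality is equivalent to $d-(t+1)-2\le d'-t-2$, i.e.\ $d-1\le d'$, which holds. When $d-(t+1)\le 2$, the left side is $1$; here one uses that the map $s\mapsto\delta^{egodensity}$ viewed as a function of ``degree minus $t$'' is non-increasing together with the bounded range of egodensity, checking the few boundary values of $d'-t$ directly (again using $|d'-d|\le 1$), to conclude the right side is no larger. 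Combining the base bound with the recurrence gives admissibility.

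The main obstacle is the base-case computation for an edge incident to $v$: such an edge simultaneously perturbs the numerator $m_v$ (by up to $d$) and the denominator $d(d-1)$ of the egodensity ratio, so the change cannot be controlled by a single term and one must genuinely optimise $\bigl|\frac{2(m_v\pm k)}{(d\pm 1)(d\pm 1-1)}-\frac{2m_v}{d(d-1)}\bigr|$ over the admissible pairs $(m_v,k)$ for both edge addition and edge removal, and then verify the worst case stays under the closed-form bound. Once the ``degree changes by at most one'' observation is in place, the recurrence step is essentially bookkeeping over the branches of the piecewise definition.
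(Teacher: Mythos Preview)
Your overall plan matches the paper's proof: for the base bound you do the same case split on where the modified edge sits relative to $v$'s ego-network, and for the recurrence you use the same key fact that a single edge change moves $|N^G(v)|$ by at most one. The paper organises the base case slightly differently---it first splits the absolute value into the two signed differences and then treats ``edge added'' and ``edge removed'' separately, deriving in each case blanket bounds such as $A(G,v)\le A(G',v)\le A(G,v)+|N^G(v)|$ and plugging them in---but the resulting four explicit maximisations are exactly what your optimisation over $(m_v,k)$ would produce, with worst case $\tfrac{2}{|N^G(v)|-2}$.

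There is, however, a genuine gap in your recurrence argument at the boundary. When $d-(t+1)\le 2$ you assert that checking the few admissible values of $d'-t$ shows the right side is at most $1$. But take any neighbour $G'$ with $d'=d$ (the modified edge not incident to $v$) and $d-t=3$: then $d-(t+1)=2$ puts the left side in the ``otherwise'' branch with value $1$, while $d'-t=3>2$ gives right side $2/(d'-t-2)=2$, so the required inequality $1\ge 2$ fails. The paper does not handle this either---it simply declares the case $|N^G(v)|-t\le 2$ ``straightforward''---so the defect sits in the stated $\delta^{egodensity}$ rather than in your strategy; but your explicit claim that the boundary check goes through is not correct as written, and would need either a cap $\min(\cdot,1)$ on the first branch or a shift of the branch threshold to be repaired.
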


\subsection{Experimental Evaluation - Pareto Approaches}
\label{sec:exp_pareto_graph}

\pgfplotsset{
    cycle list={%
        {color=brown,mark=triangle,solid},
        {color=red, mark=square,solid},
    }
}

\begin{figure*}[h!]
    \center
    \begin{tikzpicture}
        \begin{customlegend}[legend columns=2,legend style={align=center,draw=none,column sep=2ex},legend entries={
            DP-MOTkIN-Agg\textsubscript{global},
            DP-MOTkIN-Agg\textsubscript{local},
        }]
        \addlegendimage{color=brown,mark=triangle,solid}
        \addlegendimage{color=red, mark=square,solid}
        \end{customlegend}
    \end{tikzpicture}

\begin{subfigure}[b]{0.32\textwidth}
    \centering
    \resizebox{\linewidth}{!}{
        \begin{tikzpicture}
            \begin{axis}[grid=major,                
                xticklabels={$0.01$, $0.05$, $0.1$, $0.5$, $1.0$, $2.0$, $5.0$, $10.0$, $20.0$, $100.0$},
                xtick={1,2,3,4,5,6,7,8,9,10},
                xlabel={$\epsilon$},
                ylabel={Recall},                
                xticklabel style={
                    /pgf/number format/fixed,
                    /pgf/number format/precision=5,
                }]

            \addplot coordinates {
                (1, 0.0)
                (2, 0.0004)
                (3, 0.0004)
                (4, 0.0004)
                (5, 0.002)
                (6, 0.022)
                (7, 0.7732)
                (8, 0.9368000000000001)
                (9, 0.9936)
                (10, 1.0)};  

            \addplot coordinates {
                (1, 0.0004)
                (2, 0.0032)
                (3, 0.0884)
                (4, 0.9656)
                (5, 0.998)
                (6, 1.0)
                (7, 1.0)
                (8, 1.0)
                (9, 1.0)
                (10, 1.0)};

            \end{axis}
        \end{tikzpicture}            
    }
    \caption{Enron dataset}
\end{subfigure}
\begin{subfigure}[b]{0.32\textwidth}
    \centering
    \resizebox{\linewidth}{!}{
        \begin{tikzpicture}
            \begin{axis}[grid=major,                
                xticklabels={$0.01$, $0.05$, $0.1$, $0.5$, $1.0$, $2.0$, $5.0$, $10.0$, $20.0$, $100.0$},
                xtick={1,2,3,4,5,6,7,8,9,10},
                xlabel={$\epsilon$},
                ylabel={Recall},                
                xticklabel style={
                    /pgf/number format/fixed,
                    /pgf/number format/precision=5,
                }]
                \addplot coordinates {
                    (1, 0.0)
                    (2, 0.0)
                    (3, 0.0)
                    (4, 0.0)
                    (5, 0.0)
                    (6, 0.0)
                    (7, 0.0004)
                    (8, 0.0016)
                    (9, 0.17360000000000003)
                    (10, 0.9972000000000001)};  
                
                \addplot coordinates {
                    (1, 0.0)
                    (2, 0.0)
                    (3, 0.0)
                    (4, 0.0244)
                    (5, 0.7196)
                    (6, 0.95)
                    (7, 0.9992000000000001)
                    (8, 1.0)
                    (9, 1.0)
                    (10, 1.0)}; 
    
            \end{axis}
        \end{tikzpicture}            
    }
    \caption{DBLP dataset}
\end{subfigure}
\begin{subfigure}[b]{0.32\textwidth}
    \centering
    \resizebox{\linewidth}{!}{
        \begin{tikzpicture}
            \begin{axis}[grid=major,                
                xticklabels={$0.01$, $0.05$, $0.1$, $0.5$, $1.0$, $2.0$, $5.0$, $10.0$, $20.0$, $100.0$},
                xtick={1,2,3,4,5,6,7,8,9,10},
                xlabel={$\epsilon$},
                ylabel={Recall},                
                xticklabel style={
                    /pgf/number format/fixed,
                    /pgf/number format/precision=5,
                }]

                \addplot coordinates {
                    (1, 0.0004)
                    (2, 0.0)
                    (3, 0.0004)
                    (4, 0.1728)
                    (5, 0.41520000000000007)
                    (6, 0.8132)
                    (7, 0.9528)
                    (8, 0.9836)
                    (9, 0.9992000000000001)
                    (10, 1.0)};

                \addplot coordinates {
                    (1, 0.005600000000000001)
                    (2, 0.5008)
                    (3, 0.912)
                    (4, 0.9896)
                    (5, 0.9996)
                    (6, 1.0)
                    (7, 1.0)
                    (8, 1.0)
                    (9, 1.0)
                    (10, 1.0)};
                        
            \end{axis}
        \end{tikzpicture}  

    }
    \caption{Github dataset}
\end{subfigure}
\caption{Mean recall for DP-MOTkIN-Agg methods where $u_{agg} = w_{degree} \cdot degree + w_{egodensity} \cdot egodensity$ for $w_{degree} = 1$ and $w_{egodensity} = 100$ for $\epsilon \in \{0.01, 0.05, 0.1, 0.5, 1.0, 2.0\}$ and $k=5$.}
\label{fig:graph_weighted}
\end{figure*}
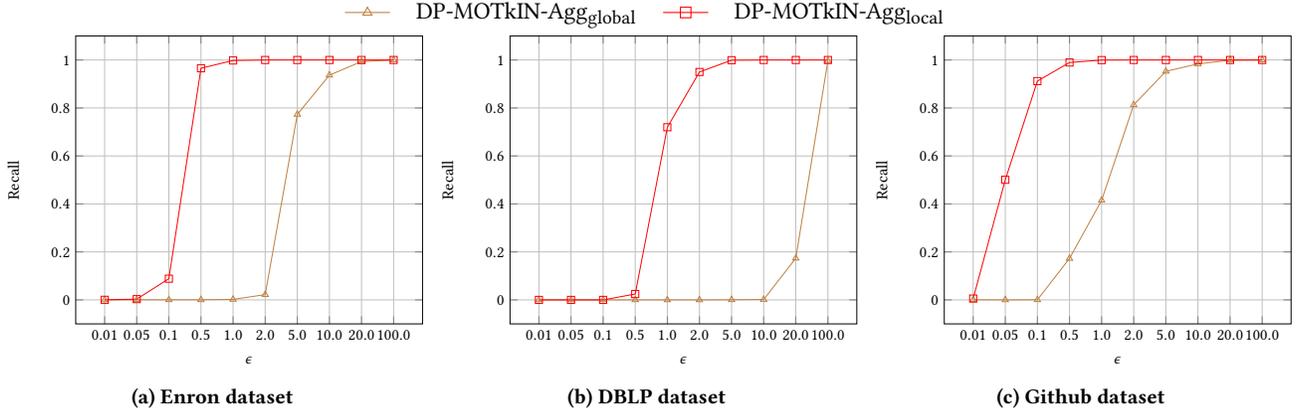


\paragraph{Datasets} We utilize three real-world graph datasets: 1) \textit{Github}: A network of developers with at least $10$ stars on the platform. Nodes represent developers, and edges indicate mutual follows between developers ($|V|=37,700$, $|E|=289,003$, $\Delta(G) = 9,458$). Here, $V$ denotes the set of vertices, $E$ the set of edges, and $\Delta(G)$ the maximum degree of graph $G$. The dataset is available in the Stanford Network Dataset Collection \cite{snapnets}.
2) \textit{Enron}: An email communication network derived from approximately a half-million emails. Nodes correspond to email addresses, and edges connect pairs of addresses that exchanged emails ($|V|=36,692$, $|E|=183,831$, $\Delta(G) = 1,383$).
3) \textit{DBLP}: A co-authorship network where nodes represent authors, and edges indicate co-authorship of at least one paper ($|V|=317,080$, $|E|=1,049,866$, $\Delta(G) = 343$).

\paragraph{Methods} We carry out the experiments with the following methods: 1) \textit{DP-MOTkIN-Pareto\textsubscript{local}}; and 2) \textit{DP-MOTkIN-Pareto\textsubscript{global}}.

\paragraph{Evalutation} Similar to the experimental setup described in Section \ref{sec:exp_pareto_decision_tree}, we report the mean metric $C$ to compare against the true top-k with largest Pareto score over $500$ repetitions. We set the privacy budget $\epsilon=\{0.1, 0.5, 1.0, 2.0, 5.0, 10.0, 20.0, 50.0\}$.  We set $k=3$ for the Github and Enron datasets, and $k=10$ for DBLP since it is a large dataset). 

\paragraph{Results} Table \ref{table:topk_graph_pareto} displays the results of the experiment. DP-MOTkIN-Pareto\textsubscript{local} achieves better results than DP-MOTkIN-Pareto\textsubscript{global} in all datasets for all values of the privacy budget ($\epsilon$). DP-MOTkIN-Pareto\textsubscript{local} achieves zero error with low privacy budget for Github ($\epsilon < 0.5$) since the range of degrees is large compared to the sensitivity of the degree, which entails in low Pareto score sensitivity. It gets the same level of error as DP-MOTkIN-Pareto\textsubscript{global} with two orders of magnitude less privacy budget. For the Enron and DBLP datasets, DP-MOTkIN-Pareto\textsubscript{local} achieves a reasonable error with $\epsilon = 20.0$. This is due to the fact that the range of degrees of Enron and DBLP are not as large as the Github dataset.

\begin{table}[ht]
    \caption{Mean error metric $C$ compared against the true top-k computed over $500$ runs for DP-MOTkIN-Pareto\textsubscript{global} (MOT\textsubscript{g}) and DP-MOTkIN-Pareto\textsubscript{local} (MOT\textsubscript{l}) for  $\epsilon=\{10^{-1},10^{0},10^{1},10^{2},10^{3}\}$ and $k=3$.}
    \centering
    \begin{tabular}{c|c|c|c|c|c|c|}
        \cline{2-7}
        & \multicolumn{2}{c|}{Github} &  \multicolumn{2}{c|}{Enron} & \multicolumn{2}{c|}{DBLP} \\ \hline
        \multicolumn{1}{|c|}{$\epsilon$} & MOT\textsubscript{g} & MOT\textsubscript{l} & MOT\textsubscript{g} & MOT\textsubscript{l} & MOT\textsubscript{g} & MOT\textsubscript{l} \\ \hline
        \multicolumn{1}{|c|}{$0.1$}  & 0.20 & 0.15 & 0.41 & 0.41 & 0.07 & 0.08 \\ \hline
        \multicolumn{1}{|c|}{$0.5$}  & 0.21 & 0.00 & 0.39 & 0.41 & 0.09 & 0.09 \\ \hline
        \multicolumn{1}{|c|}{$1.0$}  & 0.22 & 0.00 & 0.42 & 0.39 & 0.09 & 0.08 \\ \hline
        \multicolumn{1}{|c|}{$2.0$}  & 0.18 & 0.00 & 0.34 & 0.38 & 0.09 & 0.08 \\ \hline
        \multicolumn{1}{|c|}{$5.0$}  & 0.16 & 0.00 & 0.33 & 0.30 & 0.08 & 0.07 \\ \hline
        \multicolumn{1}{|c|}{$10.0$} & 0.10 & 0.00 & 0.22 & 0.16 & 0.07 & 0.06 \\ \hline
        \multicolumn{1}{|c|}{$20.0$} & 0.05 & 0.00 & 0.15 & 0.07 & 0.07 & 0.05 \\ \hline
        \multicolumn{1}{|c|}{$50.0$} & 0.00 & 0.00 & 0.07 & 0.00 & 0.05 & 0.00 \\ \hline
    \end{tabular}    
    \label{table:topk_graph_pareto}
\end{table}


\subsection{Experimental Evaluation - Aggregation Approaches}

We use the same datasets of the previous experiment (Section \ref{sec:exp_pareto_graph}). 

\paragraph{Methods} We experiment with the following methods: 1) \textit{DP-MOTkIN-Agg\textsubscript{local}}; and 2) \textit{DP-MOTkIN-Agg\textsubscript{global}}.

\paragraph{Evaluation} We report the mean recall of the true top-k nodes over $500$ repetitions using the aggregate utility function $u_{agg} = w_{degree} \cdot degree + w_{egodensity} \cdot egodensity$ with $w_{degree}=1$ and $w_{egodensity}=100$.  We set the privacy budget $\epsilon=\{10^{-1},10^{0},10^{1},10^{2},\allowbreak 10^{3}\}$ and $k=5$.

\paragraph{Results} Figure \ref{fig:graph_weighted} displays the results. DP-MOTkIN-Agg\textsubscript{local} obtains perfect recall with low privacy budget, $\epsilon \leq 1$ for Enron and Github datasets and $\epsilon \leq 5$ for DBLP dataset. Moreover, in the DBLP dataset, DP-MOTkIN-Agg\textsubscript{local} achieves reasonable recall ($0.71$) with $\epsilon=1$. On the other hand, the DP-MOTkIN-Agg\textsubscript{global} needs a higher privacy budget, from $1$ to $2$ orders of magnitude, to achieve the same recall as DP-MOTkIN-Agg\textsubscript{local}.

\section{Conclusion}

In this work, we introduced two novel mechanisms for the differentially private multi-objective selection problem: PrivPareto and PrivAgg. These mechanisms address the important challenge of selecting candidates that optimize multiple competing objectives while preserving privacy. The PrivPareto mechanism uses a novel Pareto score to identify solutions close to the Pareto frontier, while PrivAgg enables weighted aggregation of multiple objectives into a single utility function. As sensitivity measure, both PrivaPareto and PrivAgg can use global or local sensitivity. Our theoretical framework provides the means to compute sensitivity bounds based on the underlying utility functions. 


We show the practical applicability of our mechanisms through two real-world applications: i) cost-sensitive decision tree construction where we propose a family of evolutionary algorithms $DP-MOET$ and ii) multi-objective influential node selection in social networks where we propose the family of algorithms $DP-MOTkIN$. 


The experimental results showed that our local sensitivity-based approaches achieve significantly better utility compared to global sensitivity approaches across both applications and both Pareto and Aggregation approaches. Moreover, the local sensitivity-based approaches are able to perform well with typical privacy budget values $\epsilon \in [0.01, 1]$ in most experiments.



Our mechanisms provide a fundamental building block for privacy-preserving multi-objective optimization that can be composed into more complex differentially private algorithms. We believe this work opens up new possibilities for tackling real-world problems that require balancing multiple competing objectives under privacy constraints.

\bibliographystyle{ACM-Reference-Format}
\bibliography{sample}

\nottoggle{vldb}{
  \appendix                                     
\section{Proofs}
\label{sec:proofs}

\subsection{Proof of Theorem \ref{theo:admissible_delta_ps}}

\theoremadmissibledeltapareto*

\begin{proof} 
    According to the definition of admissibility (Definition \ref{def:admissible_function}), we first need to show that $\delta^{PS}(x, 0, r) \geq LS^{PS}(x, 0, r)$, for all $x \in \domain$ and all $r \in \range$. Thus, given a dataset $x \in \domain$ and a candidate $r \in \range$, we have:

    \begin{align*}
        LS^{PS}(x,0,r) & = \max_{z \in \domain | d(x,z) \leq 1}|PS(x,r)-PS(z,r)| \\
        &  = \max_{z \in \domain | d(x,z) \leq 1} |-|dom_{x,t,r}| + |dom_{z,t,r}| |\\
        &  = \max_{z \in \domain | d(x,z) \leq 1} max \{ |dom_{z,t,r}| - |dom_{x,t,r}|,  \\
        & \quad \quad \quad \quad \quad \quad \quad \quad  \; \; \; |dom_{x,t,r}| - |dom_{z,t,r}| \} \\
    \end{align*}

    The second equality is implied by the definition of the Pareto Score. The third equality given by the definition of the absolute value. Let's analyze the first case of the maximum function, $|dom_{z,t,r}| - |dom_{x,t,r}|$, where we maximize with respect to $z$:

    \begin{align}
        & |dom_{z,t,r}|  - |dom_{x,t,r}| \label{line:teo12_1}  \\
        & =  |dom_{z,t,r} \setminus dom_{x,t,r}| - | dom_{x,t,r} \setminus dom_{z,t,r} | \label{line:teo12_2} \\
        & \leq  |dom_{z,t,r} \setminus dom_{x,t,r}| \\
        & = |\{r' \in \range \; | \; r' \succeq^z r \; and \; r' \nsucceq^x r\}|  \\ 
        & = |\{r' \in ndom_{x,0,r} \; | \; u_i(z,r') \geq u_i(z, r), \forall i \in [m]  \}| \label{line:teo12_5} \\
        & \leq |\{r' \in ndom_{x,0,r} \; | \; u_i(z,r')+LS^{u_i}(z,0,r') \geq  \nonumber \\ 
        & \quad \quad \quad \quad \quad \quad \quad \quad \;\; u_i(z, r)-LS^{u_i}(z,0,r), \forall i \in [m]  \}| \label{line:teo12_6} \\
        & \leq |\{r' \in ndom_{x,0,r} \; | \; u_i(z,r')+\delta^{u_i}(z,0,r') \geq  \nonumber \\ 
        & \quad \quad \quad \quad \quad \quad \quad \quad \; \; u_i(z, r)-\delta^{u_i}(z,0,r), \forall i \in [m]  \}| \label{line:teo12_8} \\
        & = | ndom^+_{x,0,r} | \leq |dom^-_{x,0,r}| + |ndom^+_{x,0,r}| = \delta^{PS}(x,0,r) \label{line:teo12_9}
    \end{align}
    Line \ref{line:teo12_5} is due to the definition of $ndom$. The inequality from Line \ref{line:teo12_6} is given by the fact that $LS^u(z,0,r') \geq 0$ and $LS^u(z,0,r) \geq 0$ which implies that the set on the left-hand side is a subset of the right-hand side. line \ref{line:teo12_8} is given by the admissibility of $\delta^{u_1},\cdots,\delta^{u_m}$. The definition of $ndom^+$ results in Line \ref{line:teo12_9}. The second case of the maximum function, $|dom_{x,t,r}| - |dom_{z,t,r}|$, is symmetric to the first case. Thus, we have shown that $\delta^{PS}(x, 0, r) \geq LS^{PS}(x, 0, r)$.

    Now, We also show that $\delta^{PS}(x, t+1, r) \geq \delta^{PS}(y, t, r)$, for all $x,y$ such that $d(x,y) \leq 1$ and all $t \geq 1$. Given two neighboring datasets $x,y$, $d(x,y) \leq 1$, a distance $t$ and a candidate $r$. We have:
    \begin{align}
        & \delta^{PS}(x, t+1, r) \nonumber \\
        & = |dom^-_{x,t+1,r}| + |ndom^+_{x,t+1,r}| \nonumber \\
        & = |\{ r' \in dom_{x,r} \; | \; \exists i \in [m] \; s.t. \; u_i^{-(t+1)}(x,r') \leq  u_i^{+(t+1)}(x,r) \}| \nonumber \\
        &  + |\{ r' \in ndom_{x,r} \; | \; u_i^{+(t+1)}(x,r')  \geq  u_i^{-(t+1)}(x,r), \forall i \in [m] \} \nonumber \\
        & = |\{ r' \in (dom_{x,r}\cap dom_{y,r}) \; | \exists i \in [m] \; s.t. \nonumber \\
        & \quad \quad \quad \quad \quad \quad \quad u_i^{-(t+1)}(x,r') \leq  u_i^{+(t+1)}(x,r) \}| \nonumber \\
        & + |\{ r' \in (dom_{x,r}\cap ndom_{y,r}) \; | \; \exists i \in [m] \; s.t. \nonumber\\
        & \quad \quad \quad \quad \quad \quad \quad u_i^{-(t+1)}(x,r') \leq  u_i^{+(t+1)}(x,r) \}| \nonumber \\     
        & + |\{ r' \in (ndom_{x,r} \cap dom_{y,r})  \; | \nonumber \\
        & \quad \quad \quad \quad \quad \quad \quad u_i^{+(t+1)}(x,r')  \geq  u_i^{-(t+1)}(x,r), \forall i \in [m] \}| \nonumber \\
        & + |\{ r' \in (ndom_{x,r} \cap ndom_{y,r})  \; | \nonumber \\ 
        & \quad \quad \quad \quad \quad \quad \quad  u_i^{+(t+1)}(x,r')  \geq  u_i^{-(t+1)}(x,r), \forall i \in [m] \}| \label{line:four_terms}
    \end{align}

    The last equality is true since $dom_{x,r} = (dom_{x,r}\cap dom_{y,r}) \cup (dom_{x,r}\cap ndom_{y,r})$ as $dom_{y,r} \cup ndom_{y,r} = \range$  which also implies that $ndom_{y,r} = (ndom_{x,r}\cap dom_{y,r}) \cup (ndom_{x,r}\cap dom_{y,r})$. Also, note that $(dom_{x,r}\cap dom_{y,r}) \cap (dom_{x,r}\cap ndom_{y,r}) = \emptyset$ as $dom_{y,r} \cap ndom_{y,r} = \emptyset$ and $(ndom_{x,r}\cap dom_{y,r}) \cap (ndom_{x,r}\cap ndom_{y,r}) = \emptyset$ for the same reason.

    Now, we analyze each term of the sum in Line \ref{line:four_terms}. The first term is given by:
    \begin{align*}
        & |\{ r' \in (dom_{x,r} \cap dom_{y,r}) \; | \; \exists i \in [m] \; s.t. \; \\
        & \quad \quad \quad u_i(x,r') - \sum_{j=0}^{t+1} \delta^{u_i}(x,j,r') \leq  u_i(x,r) + \sum_{j=0}^{t+1} \delta^{u_i}(x,j,r) \}| \\
        & \geq |\{ r' \in (dom_{x,r} \cap dom_{y,r}) \; | \; \exists i \in [m] \; s.t. \; \\
        & \quad \quad \quad u_i(y,r') - \sum_{j=1}^{t+1} \delta^{u_i}(x,j,r') \leq  u_i(y,r) + \sum_{j=1}^{t+1} \delta^{u_i}(x,j,r) \}| \\
        & \geq |\{ r' \in (dom_{x,r} \cap dom_{y,r}) \; | \; \exists i \in [m] \; s.t. \; \\
        & \quad \quad \quad u_i(y,r') - \sum_{j=0}^{t} \delta^{u_i}(x,j,r') \leq  u_i(y,r) + \sum_{j=0}^{t} \delta^{u_i}(x,j,r) \}| \\
        & = |\{ r' \in (dom_{x,r} \cap dom_{y,r}) \; | \; \exists i \in [m] \; s.t. \; u_i^{+t}(y,r') \leq  u_i^{-t}(y,r) \}|
    \end{align*}
    The first inequality is true as $u_i(x, r') \leq u_i(y, r') + \delta^{u_i}(y,0,r')$ and $u_i(x, r) \geq u_i(y, r) + \delta^{u_i}(y,0,r)$ which implies that the set on the left-hand side is a superset of the right-hand side. The second inequality is due to the admissibility of $\delta^{u_i}$.
    
    Now, we analyze the third term of the sum. Let $r' \in dom_{y,r}$ be a candidate, then $u_i(y,r') \geq u_i(y,r), \forall i \in [m]$. We have that:
    

    \begin{align}
        & u_i(y,r') \geq u_i(y,r), \forall i \in [m] \label{line:condition1} \\
        & \Rightarrow u_i(x,r') + \delta^{u_i}(x,0,r') \geq u_i(x,r) - \delta^{u_i}(x,0,r), \forall i \in [m] \\
        & \Rightarrow u_i(x,r') + \sum_{j=0}^t \delta^{u_i}(x,t,r') \geq u_i(x,r) - \sum_{j=0}^t \delta^{u_i}(x,t,r), \forall i \in [m] \label{line:condition3} 
    \end{align}

    The first implication is true as $u_i(y, r') \leq u_i(x, r') + \delta^{u_i}(x,0,r')$ and $u_i(y, r) \geq u_i(x, r) + \delta^{u_i}(x,0,r)$. The second implication is due to the fact that $\delta^{u_i}(x,t,r') \geq LS^{u_i}(x,t,r') \geq 0$ and $\delta_{u_i}(x,t,r) \geq LS^{u_i}(x,t,r) \geq 0$, $\forall t \geq 0$. Thus the condition on Line \ref{line:condition1} implies the condition on Line \ref{line:condition3}. Therefore by the absorption law:

    \begin{align*}
    & \{ r' \in (ndom_{x,r} \cap dom_{y,r})  \; | \; u_i^{+(t+1)}(x,r')  \geq  u_i^{-(t+1)}(x,r), \forall i \in [m] \} \\
    & = \{ r' \in (ndom_{x,r} \cap dom_{y,r}) \} \\
    & \geq \{  r' \in (ndom_{x,r} \cap dom_{y,r}) \; | \; \exists i \in [m] \; s.t. \; u_i^{+t}(y,r') \leq  u_i^{-t}(y,r) \}
    \end{align*}
    
    The inequality is true as the set on the left-hand side is a superset of the right-hand side. The second and the forth terms of the sum are analyzed in a similar way. Following from Line \ref{line:four_terms}, we have that:

    \begin{align*}
        & \delta^{PS}(x, t+1, r) \\
        & \geq |\{ r' \in (dom_{x,r} \cap dom_{y,r}) \; | \; \exists i \in [m] \; s.t. \; u_i^{+t}(y,r') \leq  u_i^{-t}(y,r) \}| \\
        & + |\{ r' \in (dom_{x,r}\cap ndom_{y,r}) \; | \; u_i^{+t}(y,r')  \geq  u_i^{-t}(y,r), \forall i \in [m] \}| \\   
        & + |\{  r' \in (ndom_{x,r} \cap dom_{y,r}) \; | \; \exists i \in [m] \; s.t. \; u_i^{+t}(y,r') \leq  u_i^{-t}(y,r) \} | \\
        & + |\{ r' \in (ndom_{x,r} \cap ndom_{y,r})  \; | \; u_i^{+t}(y,r')  \geq  u_i^{-t}(y,r), \forall i \in [m] \}| \\
        & = |dom^-_{y,t,r}| + |ndom^+_{y,t,r}| = \delta^{PS}(y,t,r)
    \end{align*}

    since $dom_{x,r} \cup ndom_{x,r} = \range$ and $dom_{y,r} \cup ndom_{y,r} = \range$. Therefore, $\delta^{PS}(x, t+1, r) \geq \delta^{PS}(y, t, r)$, for all $x,y$ such that $d(x,y) \leq 1$ and all $t \geq 1$. Thus, $\delta^{PS}(x,t,r)$ is admissible.
\end{proof}

\subsection{Proof of Lemma \ref{lemma:admissible_sensitivity_tpr_tnr}}


    \begin{equation*}
        \delta^{TPR}(x,t,\tree) =
        \begin{cases}
            \frac{P'(x, t)-TP'(x, t, \tree)}{P'(x, t)(P'(x, t)-1)} & \text{if } t \leq P(x) -TP(x,\tree) \\
            \frac{TP'(x, t, \tree)}{P'(x, t)(P'(x, t)-1)} & \text{otherwise}
        \end{cases}, 
    \end{equation*}
    

    \begin{equation*}
        P'(x,t,\tree) =
        \begin{cases}
            P(x) - t & \text{if } t \leq P(x) - 2 \\
            2 & \text{otherwise}
        \end{cases}, 
    \end{equation*}

    \begin{equation*}
        TP'(x,t,\tree) =
        \begin{cases}
            TP(x,\tree) & \text{if } t \leq P(x) -TP(x,\tree) \\
            P(x) - t & \text{if } t \leq P(x) - 2\\
            2 & \text{otherwise}
        \end{cases}, 
    \end{equation*}


\theoremadmissibledeltatprtnr*

\begin{proof}
    In this proof, we show that $\delta^{TPR}(x,t,\tree)$ is admissible and the proof $\delta^{TNR}(x,t,\tree)$ is symmetric. 

    For this proof, suppose that $TP(x,\tree) \geq 2$ and $P(x) \geq 2$. First, we show that $\delta^{TPR}(x,0,\tree) \geq LS^{TPR}(x,0,\tree)$, for all $x \in \domain$ and all $\tree \in \range$. Given a dataset $x \in \domain$, a neighboring dataset $x'$ of $x$ and a candidate $\tree \in \range$, we have that:



    \begin{align*}
        LS^{TPR}(x,0,\tree) & = \max_{x' \in \domain | d(x,x') \leq 1} \vast| \frac{TP(x,\tree)}{P(x)} - \frac{TP(x',\tree)}{P(x')} \vast| \\
        & = \max \vast\{ \underbrace{ \left| \frac{TP(x,\tree)}{P(x)} - \frac{TP(x,\tree)+1}{P(x)+1} \right| }_{(1)} , \\
        & , \underbrace{ \left| \frac{TP(x,\tree)}{P(x)} - \frac{TP(x,\tree)}{P(x)+1} \right|}_{(2)}, \underbrace{\left| \frac{TP(x,\tree)}  {P(x)} - \frac{TP(x,\tree)-1}{P(x)-1} \right|}_{(3)},    \\
        & , \underbrace{\left| \frac{TP(x,\tree)}{P(x)} - \frac{TP(x,\tree)}{P(x)-1} \right|}_{(4)}, \underbrace{\left| \frac{TP(x,\tree)}{P(x)} - \frac{TP(x,\tree)}{P(x)} \right|}_{(5)} \vast\}
    \end{align*}

    The last equality is given by the analysis of $x'$. The database $x'$ can be obtained from $x$ in one of the following ways:
    
    \begin{enumerate}[label=\roman*]
        \item a true positive is added to $x$ to obtain $x'$, which implies that $TP(x', \tree) = TP(x', \tree) + 1$ and $P(x') = P(x) + 1$ (1); 
        \item a false negative is added to $x$ to obtain $x'$, which implies that $TP(x', \tree) = TP(x', \tree)$ and $P(x') = P(x) + 1$ (2); 
        \item a true positive is removed from $x$ to obtain $x'$, which implies that $TP(x', \tree) = TP(x', \tree) - 1$ and $P(x') = P(x) - 1$ (3); 
        \item a false negative is removed from $x$ to obtain $x'$, which implies that $TP(x', \tree) = TP(x', \tree)$ and $P(x') = P(x) - 1$ (4); 
        \item a false positive is added to $x$ to obtain $x'$, which implies that $TP(x', \tree) = TP(x', \tree)$ and $P(x') = P(x)$ (5).
        \item a false positive is removed from $x$ to obtain $x'$, which implies that $TP(x', \tree) = TP(x', \tree)$ and $P(x') = P(x)$ (5).
        \item a true negative is added to $x$ to obtain $x'$, which implies that $TP(x', \tree) = TP(x', \tree)$ and $P(x') = P(x)$ (5).
        \item a true negative is removed from $x$ to obtain $x'$, which implies that $TP(x', \tree) = TP(x', \tree)$ and $P(x') = P(x)$ (5).
    \end{enumerate}

    To solve the maximization, we state the following facts: a) (3) is larger than (1) as (3) can be simplified to $\frac{P(x)-TP(x, \tree)}{P(x)(P(x)-1)}$ and (1) can be simplified to $\frac{P(x)-TP(x, \tree)}{P(x)(P(x)+1)}$; b) (4) is larger than (2) since (4) can be reduced to $\frac{TP(x, \tree)}{P(x)(P(x)+1)}$ and (2) to $\frac{TP(x, \tree)}{P(x)(P(x)-1)}$; and c) (5) is equal to $0$.

    Thus,

    \begin{align*}
        LS^{TPR}(x,0,\tree) & = \max \vast\{ \frac{P(x)-TP(x, \tree)}{P(x)(P(x)-1)}, \frac{TP(x, \tree)}{P(x)(P(x)-1)} \vast\} \\
        & = delta^{TPR}(x,0,\tree)
    \end{align*}





\end{proof}

\subsection{Proof of Lemma \ref{lemma:egodensity_sensitivity}}

\theoremadmissibledeltaegodensity*

\begin{proof}

    Let $A(G, v) = 2.|\{(u,w) \in E(G): u,w \in N^G(v)\}|$. First, we show that $\delta^{egodensity}(x, 0, r) \geq LS^{egodensity}(x, 0, r)$, for all $x \in \domain$ and all $r \in \range$. Here, we consider that $|N^G(v)| > 2$ as the case where $|N^G(v)| \leq 2$ is straightforward. Given a dataset $x \in \domain$ and a candidate $r \in \range$, we have:
    
    \begin{align*}
        LS^{ed}(G,0,v) & = \max_{G' \in \domain | d(G,G') \leq 1}|egodensity(G,v) \\
        & \quad \quad \quad \quad \quad \quad \quad -egodensity(G',v)| \\
        & = \max_{G' \in \domain | d(G,G') \leq 1} \vast| \frac{2.A(G, v)}{|N^G(v)| . (|N^G(v)|-1)}  \\
        & \quad \quad \quad \quad \quad \quad \; - \frac{2.A(G', v)}{|N^{G'}(v)| . (|N^{G'}(v)|-1)} \vast| \\
        & = \max_{G' \in \domain | d(G,G') \leq 1} \max \vast\{ \frac{2.A(G, v)}{|N^G(v)| . (|N^G(v)|-1)}  \\ 
        & \quad \quad \quad \quad \quad \quad \quad \quad \quad - \frac{2.A(G', v)}{|N^{G'}(v)| . (|N^{G'}(v)|-1)}, \\
        & \quad \quad \quad \quad \quad \quad \quad \quad \quad \quad  \frac{2.A(G', v)}{|N^{G'}(v)| . (|N^{G'}(v)|-1)}\\ 
        & \quad \quad \quad \quad \quad \quad \quad \quad \quad  - \frac{2.A(G, v)}{|N^G(v)| . (|N^G(v)|-1)} \vast\} \\
        & = \max \vast\{ \max_{G' \in \domain | d(G,G') \leq 1} \vast( \frac{2.A(G, v)}{|N^G(v)| . (|N^G(v)|-1)}  \\ 
        & \quad \quad \quad \quad \quad \quad \quad \quad \quad \quad - \frac{2.A(G', v)}{|N^{G'}(v)| . (|N^{G'}(v)|-1)} \vast), \\
        & \quad \quad \quad \quad  \max_{G' \in \domain | d(G,G') \leq 1} \vast(   \frac{2.A(G', v)}{|N^{G'}(v)| . (|N^{G'}(v)|-1)}\\ 
        & \quad \quad \quad \quad \quad \quad \quad \quad \quad \quad  - \frac{2.A(G, v)}{|N^G(v)| . (|N^G(v)|-1)} \vast) \vast\}
    \end{align*}

Thus to to maximize the outer maximum function of the last line, we maximize each of the inner maximum functions. The graph $G'$ is a neighboring graph of $G$ that differs by exactly one edge. Let $e=(u,w)$ be the edge that is added or removed from $G$ to obtain $G'$. Let's break down the proof in two cases: (i) Suppose that edge $e$ is added to $G$ to obtain $G'$, i.e., $E(G') = E(G) \cup \{e\}$; (ii) Suppose that edge $e$ is removed from $G$ to obtain $G'$, i.e., $E(G') = E(G) \setminus \{e\}$.

\begin{enumerate}[label=(\roman*)]    

    \item First, suppose that $e$ is added to $G$ to obtain $G'$. Let's analyze the three cases for the edge $e$: 1) $u \neq v$ or $w \neq v$ are not neighbors of $v$ in $G$, implying that $A(G, v) = A(G', v)$ and $N^{G}(v) = N^{G'}(v)$; 2) $u \neq v$ and $w \neq v$ are neighbors of $v$ in $G$, which means that $N^{G'}(v) = N^{G}(v)$ and $A(G', v) = A(G, v) + 1$; 3) $u \neq v$ and $w = v$ ($w$ is not a neighbor of $v$ in $G$) which implies that $N^{G'}(v) = N^{G}(v) +1$ and $A(G, v) \leq A(G', v) \leq A(G, v) + |N^{G}(v)|$ as $w$ can be connected to at most $|N^{G}(v)|$ nodes in $G'$ that are also connected to $v$; 4) The case where $u \neq v$ and $w = v$ is symmetric to the previous one. Therefore, in general, we have that $A(G, v) \leq A(G', v) \leq A(G, v) + |N^{G}(v)|$ and $|N^{G}(v)| \leq |N^{G'}(v)| \leq |N^{G}(v)| + 1$. Therefore, we calculate and upper bound for both inner maximum functions:
    
    \begin{align*}
        & \frac{2.A(G, v)}{|N^G(v)| . (|N^G(v)|-1)} - \frac{2.A(G', v)}{|N^{G'}(v)| . (|N^{G'}(v)|-1)} \\
        & \leq \frac{2.A(G, v)}{|N^G(v)| . (|N^G(v)|-1)} - \frac{2.A(G, v)}{(|N^G(v)|+1) . (|N^G(v)|)} \\
        & = \frac{2}{|N^G(v)| }. \left( \frac{A(G, v)}{|N^G(v)|-1)} - \frac{A(G, v)}{|N^G(v)|+1)}  \right) \\
        & = \frac{2}{|N^G(v)| }. \left( \frac{2.A(G, v)}{(|N^G(v)|-1).(|N^G(v)|+1)}  \right) \\
        & \leq \frac{2}{|N^G(v)| }. \left( \frac{(|N^G(v)|-1).|N^G(v)|}{(|N^G(v)|-1).(|N^G(v)|+1)}  \right) \\
        & = \frac{2}{|N^G(v)| }. \frac{|N^G(v)|}{|N^G(v)|+1} \\
        & = \frac{2}{|N^G(v)| +1} \leq \delta^{egodensity}(G,0,v)
    \end{align*}
    since $A(G,v) \leq  \binom{|N^G(v)|}{2} $. Then:
    \begin{align*}
        & \frac{2.A(G', v)}{|N^{G'}(v)| . (|N^{G'}(v)|-1)} - \frac{2.A(G, v)}{|N^G(v)| . (|N^G(v)|-1)} \\
        & \leq \frac{2.(A(G, v) + |N^{G}(v)|)}{|N^{G}(v)| . (|N^{G}(v)|-1)} - \frac{2.A(G, v)}{|N^G(v)| . (|N^G(v)|-1)} \\
        & = \frac{2.}{|N^{G}(v)|-1} \leq \delta^{egodensity}(G,0,v)
    \end{align*}

    \item Now, suppose that edge $e$ is removed from $G$ to obtain $G'$. Let's analyze the four cases for the edge $e$: 1) $u \neq v$ or $w \neq v$ are not neighbors of $v$ in $G$, implying that $A(G, v) = A(G', v)$ and $N^{G}(v) = N^{G'}(v)$; 2) $u \neq v$ and $w \neq v$ are neighbors of $v$ in $G$, which means that $N^{G'}(v) = N^{G}(v)$ and $A(G', v) = A(G, v) - 1$; 3) $u \neq v$ and $w = v$ ($w$ is a neighbor of $v$ in $G$) which implies that $N^{G'}(v) = N^{G}(v) -1$ and $A(G, v) - (|N^{G}(v)|-1) \leq A(G', v) \leq A(G, v)$ as $w$ can be connected to at most $|N^{G}(v)|-1$ nodes in $G$ that are also connected to $v$; 4) The case where $u \neq v$ and $w = v$ is symmetric to the previous one. Thus, in general, we have that $A(G, v) - (|N^{G}(v)|-1) \leq A(G', v) \leq A(G, v)$ and $|N^{G'}(v)| -1 \leq |N^{G}(v)| \leq |N^{G'}(v)|$. Therefore, we calculate and upper bound for both inner maximum functions:
    \begin{align*}
        & \frac{2.A(G, v)}{|N^G(v)| . (|N^G(v)|-1)} - \frac{2.A(G', v)}{|N^{G'}(v)| . (|N^{G'}(v)|-1)} \\
        & \leq \frac{2.A(G, v)}{|N^G(v)| . (|N^G(v)|-1)} - \frac{2.(A(G, v)- (|N^{G}(v)|-1))}{|N^G(v)| . (|N^G(v)|-1)} \\
        & = \frac{2}{|N^G(v)| } \leq \delta^{egodensity}(G,0,v)
    \end{align*}
    and
    \begin{align*}
        & \frac{2.A(G', v)}{|N^{G'}(v)| . (|N^{G'}(v)|-1)} - \frac{2.A(G, v)}{|N^G(v)| . (|N^G(v)|-1)} \\
        & \leq \frac{2.A(G, v) |}{(|N^{G}(v)|-1) . (|N^{G}(v)|-2)} - \frac{2.A(G, v)}{|N^G(v)| . (|N^G(v)|-1)} \\
        & \leq \frac{2}{|N^{G}(v)|-1} \left (\frac{A(G, v) |}{(|N^{G}(v)|-2)} - \frac{A(G, v)}{|N^G(v)| } \right) \\
        & = \frac{2}{|N^{G}(v)|-1} \left (\frac{2.A(G, v) |}{(|N^G(v)|.|N^{G}(v)|-2)} \right) \\
        & \leq \frac{2}{|N^{G}(v)|-1} \left (\frac{(|N^G(v)|-1).|N^G(v)|}{(|N^G(v)|.|N^{G}(v)|-2)} \right) \\
        & = \frac{2}{|N^{G}(v)|-1} . \frac{|N^G(v)|-1}{|N^{G}(v)|-2} \\
        & = \frac{2.}{|N^{G}(v)|-2} = \delta^{egodensity}(G,0,v)
    \end{align*}

    \end{enumerate}

    Now, we show that $\delta^{egodensity}(G, t+1, v) \geq \delta^{egodensity}(G, t, v)$, for all $G,G'$ such that $d(G,G') \leq 1$ and all $t \geq 1$. Here, we assume that $|N^G(v)|-t > 2$ as the case where $|N^G(v)|-t \leq 2$ is straightforward. Given two neighboring graphs $G,G'$, $d(G,G') \leq 1$, a distance $t < |N^G(v)|-2$ and a candidate vertex $v$. We have:

    \begin{align*}
        & \delta^{egodensity}(G, t+1, v)  = \frac{2}{degree(G,v)-t-3} \\        
        & \geq \frac{2}{degree(G',v)-t-2} = \delta^{egodensity}(G', t, v)
    \end{align*}
    This is true since $degree(G,v) \leq degree(G',v)+1$ as $G$ and $G'$ differ by at most one edge. 

\end{proof}

}

\end{document}